\newtheorem{lem}{Lemma}
\newtheorem{prop}{Proposition}
\newtheorem{claim}{Claim}
\begin{document}

\title{Heavy Traffic Optimal Resource Allocation Algorithms for Cloud Computing Clusters}

\author{\IEEEauthorblockN{Siva Theja Maguluri and R. Srikant}
\IEEEauthorblockA{Department of ECE and CSL\\
University of Illinois at Urbana-Champaign\\
siva.theja@gmail.com; rsrikant@illinois.edu}
\and
\IEEEauthorblockN{Lei Ying}
\IEEEauthorblockA{Department of ECEE\\
Arizona State University\\
lying6@asu.edu}
}

\maketitle
\begin{abstract}
Cloud computing is emerging as an important platform for business, personal and mobile computing applications. In this paper, we study a stochastic model of cloud computing, where jobs arrive according to a stochastic process and request resources like CPU, memory and storage space. We consider a model where the resource allocation problem can be separated into a routing or load balancing problem and a scheduling problem. We study the join-the-shortest-queue routing and power-of-two-choices routing algorithms with MaxWeight scheduling algorithm. It was known that these algorithms are throughput optimal. In this paper, we show that these algorithms are queue length optimal in the heavy traffic limit.
\end{abstract}

\begin{IEEEkeywords}
Scheduling, load balancing, cloud computing, resource allocation.
\end{IEEEkeywords}

\section {Introduction}\label{sec:intro}
Cloud computing services are emerging as an important resource for personal as well as commercial computing applications. Several cloud computing systems are now commercially available, including Amazon EC2 system \cite{EC2}, Google's AppEngine \cite{appengine}, and Microsoft's Azure \cite{azure}. A comprehensive survey on cloud computing can be found in \cite{FosZhaRai_08,ArmFoxGri_09,MenNgo_09}.

In this paper, we focus on cloud computing platforms that provide infrastructure as service. Users submit requests for resources in the form of virtual machines (VMs). Each request specifies the amount of resources it needs in terms of processor power, memory, storage space, etc.. We call these requests jobs. The cloud service provider first queues these requests and then schedules them on physical machines called servers.

Each server has a limited amount of resources of each kind. This limits the number and types of jobs that can be scheduled on a server. The set of jobs of each type that can be scheduled simultaneously at a server is called a configuration. The convex hull of the possible configurations at a server is the capacity region of the server. The total capacity region of the cloud is then the Minkowski sum of the capacity regions of all servers.

The simplest architecture for serving the jobs is to queue them at a central location. In each time slot, a central scheduler chooses the configuration at each server and allocates jobs to the servers, in a preemptive manner. As pointed out in \cite{magsriyin12}, this problem is then identical to scheduling in an ad hoc wireless network with interference constraints.
In practice, however, jobs are routed to servers upon arrival. Thus, queues are maintained at each individual server. It was shown in \cite{magsriyin12} that join-the-shortest queue-type algorithms for routing, along with the MaxWeight scheduling algorithm \cite{TasEph_92} at each server is throughput optimal. The focus of this paper is to study the delay, or equivalently, the queue length performance of the algorithms presented in \cite{magsriyin12}.


Characterizing the exact delay or queue length in general is difficult. So, we study the system in the heavy-traffic regime, i.e., when the exogenous arrival rate is close to the boundary of the capacity region. In this regime, for some systems, the multi-dimensional state of the system reduces to a single dimension, called state-space collapse. In \cite{Bramson_state_space, Williams_state_space}, a method was outlined to use the state-space collapse for studying the diffusion limits of several queuing systems. This procedure has been successfully applied to a variety of multiqueue models served by multiple servers \cite{rei_state_space, har_state_space, harlop_state_space, belwil_state_space}. But these models assume that the system is work conserving, i.e., queued jobs are processed at maximum rate by each server. Stolyar \cite{Sto_04}, generalized this notion of state-space collapse and resource pooling to a generalized switch model, where it is hard to define work-conserving policies. This was used to establish the heavy traffic optimality of the MaxWeight algorithm.

Most of these results are based on considering a scaled version of queue lengths and time, which converges to a regulated Brownian motion, and then show sample-path optimality in the scaled time over a finite time interval. This then allows a natural conjecture about steady state distribution.
 In \cite{erysri-heavytraffic}, the authors present an alternate method to prove heavy traffic optimality  that is not only simpler, but shows heavy traffic optimality in unscaled time. In addition, this method directly obtains heavy-traffic optimality in steady state. The method consists of the following three steps.
\begin{enumerate}
\item[(1)] \emph{Lower bound:} First a lower bound is obtained on the weighted sum of expected queue lengths by comparing with a single-server queue. A lower bound for the single-server queue, similar to the Kingman bound \cite{kingman}, then gives a lower bound to the original system.
\item[(2)] \emph{State-space collapse}: The second step is to show that the state of the system collapses to a single dimension. Here, it is not a complete state-space collapse, as in the Brownian limit approach, but an approximate one. In particular, this step is to show that the queue length along a certain direction  increases as the exogenous arrival rate gets closer to the boundary of the capacity region but the queue length in any perpendicular direction is bounded.
\item[(3)] \emph{Upper bound}: The state-space collapse is then used to obtain an upper bound on the weighted queue length. This is obtained by using a natural Lyapunov function suggested by the resource pooling. Heavy-traffic optimality can be obtained if  the lower bounds and the upper bounds coincide.

\end{enumerate}

In this paper, we apply the above three-step procedure to study the resource allocation algorithms presented in \cite{magsriyin12}. We briefly review the results in \cite{magsriyin12} now. Jobs are first routed to the servers, and are then queued at the servers, and a scheduler schedules jobs at each server. So, we need an algorithm that has two components, viz.,
\begin{enumerate}
\item a \emph{routing algorithm} that routes  new jobs to servers in each time slot (we assume that the
jobs are assigned to a server upon arrival and they cannot be moved
to a different server) and
\item a \emph{scheduling algorithm} that chooses the configuration of each server, i.e., in each time
slot, it decides which jobs to serve. Here
we assume that jobs can be preempted, i.e., a job can be served in
a time slot, and then be preempted if it is not scheduled in the
next time slot. Its service can be resumed in the next time it is
scheduled. Such a model is applicable in situations where job sizes are typically large.
\end{enumerate}



It was shown in \cite{magsriyin12} that using the join-the-shortest-queue (JSQ) routing and MaxWeight scheduling algorithm is throughput optimal. In Section \ref{sec:JSQMaxwt}, we show that this policy is queue length optimal in the heavy traffic limit  when all the servers are identical. We use the three step procedure described above to prove the heavy traffic optimality. The lower bound in this case is identical to the case of the MaxWeight scheduling problem. However, state-space collapse does not directly follow from the corresponding results for the MaxWeight algorithm in \cite{erysri-heavytraffic} due to the additional routing step here.  We use this to obtain an upper bound that coincides with the lower bound in the heavy traffic limit.

JSQ needs queue length information of all servers at the router. In practice, this communication
overhead can be quite significant when the number of servers is large. An alternative algorithm is the power-of-two-choices routing algorithm. In each time slot, two servers are chosen uniformly at random and new arrivals are routed to the server with the shorter queue. It was shown in \cite{magsriyin12} that the power-of-two-choices routing algorithm with the MaxWeight scheduling is throughput optimal if all the servers are identical.  Here, we show that the heavy-traffic optimality in this case is a minor modification of the corresponding result for JSQ routing and MaxWeight scheduling.

A special case of the resource allocation problem is when all the jobs are of same type. In this case, scheduling is not required at each server. The problem reduces to a routing-only problem which is well studied \cite{Mit_96,Yilupoweroftwo, Chen10power2, He_down_po2, dobrushin_po2}. For reasons to be explained later, the results, from Section \ref{sec:JSQMaxwt} cannot be applied in this case since the capacity region is along a single dimension (of the form $\lambda<\mu$). In Section \ref{sec:powerof2}, we show heavy traffic optimality of the power-of-two-choices routing algorithm. The lower and upper bounds in this case are identical to the case of JSQ routing in \cite{erysri-heavytraffic}. The main contribution here is to show state-space collapse, which is somewhat different compared to \cite{erysri-heavytraffic}. The results here complement the heavy-traffic optimality results in \cite{Chen10power2, He_down_po2} which were obtained using Brownian motion limits.

\subsection*{Note on Notation}
The set of real numbers, the set of non-negative real numbers,

and the set of positive real numbers are denoted by $\mathbb{R}$, $\mathbb{R}_+$ and $\mathbb{R}_{++}$ respectively.
We denote vectors in $\mathbb{R}^{J}$ or $\mathbb{R}^{M}$ by $x$, in normal font. We
use bold font $\mathbf{x}$ to denote vectors in $\mathbb{R}^{JM}$.
Dot product in the vector spaces $\mathbb{R}^{J}$ or $\mathbb{R}^{M}$ is denoted by $\left\langle x,y\right\rangle$ and
the dot product in $\mathbb{R}^{JM}$ is denoted by $\left\langle \mathbf{x},\mathbf{y}\right\rangle $.

\section{System Model and Algorithm}\label{sec:model}

Consider a discrete time cloud computing system as follows. There are $M$
servers indexed by $m$. Each server has $I$ different kinds of resources
such as processing power, disk space, memory, etc.. Server $m$ has $R_{i,m}$
units of resource $i$ for $i\in\{1,2,3,...,I\}$. There are $J$ different types
of jobs indexed by $j$. Jobs of type $j$ need $r_{i,j}$ units of
resource $i$ for their service. A job is said to be of size $D$ if it takes $D$ units of time to finish its service. Let $D_{max}$ be the maximum allowed service time.

Let ${\cal A}_j(t)$ denote the set of type-$j$ jobs that arrive at the beginning of time slot $t.$ Indexing the jobs in ${\cal A}_j(t)$ from $1$ through $|{\cal A}_j(t)|,$
we define $a_j(t)=\sum_{k \in{\cal A}_j(t)} D_k,$ to be the overall size of the jobs in ${\cal A}_j(t)$ or the total time slots requested by the jobs in ${\cal A}_j(t)$. 
Thus, $a_j(t)$ denotes the total work load of type $j$ that arrives in time slot $t$.
We assume that $a_j(t)$ is a stochastic process which is i.i.d. across time slots, $\mathbb{E}[a_j(t)]=\lambda_j$ and $\Pr(a_j(t)=0)>\epsilon_A$ for some $\epsilon_A>0$ for all $j$ and $t$. Many of these assumptions can be relaxed, but we make these assumptions for the ease of exposition. Second moments of the arrival processes
are assumed to be bounded. Let $var[a_{j}(t)]=\sigma_{j}^2$, $\lambda=(\lambda_{1},....\lambda_{J})$
and $\sigma=(\sigma_{1},....\sigma_{J})$. We denote $\sigma^{2}=(\sigma_{1}^{2},....\sigma_{J}^{2})$.

In each time slot, the central router routes the new arrivals to one of the servers. Each server maintains $J$ queues corresponding to the work loads of the  $J$ different types of jobs.
Let $q_{j,m}(t)$ denote the total backlogged job size of the type $j$ jobs at  server $m$ at time slot $t$.

Consider server $m$. We say that server $m$ is in configuration $s=(s_{1},s_{2},...,s_{J})\in\left(\mathbb{Z}_{+}\right)^{J}$
if the server is serving $s_{1}$ jobs of type $1$, $s_{2}$ jobs
of type $2$ etc. This is possible only if the server has enough
resources to accommodate all these jobs. In other words, $\overset{J}{\underset{j=1}{\sum}}s_{j}r_{i,j}\leq R_{i,m}\forall i\in\{1,2,...,I\}$. Let $s_{max}$ be the maximum number of jobs of any type that can be scheduled on any server.
Let $\mathcal{S}_{m}$ be the set of feasible configurations on server
$m$. We say that $s$ is a maximal configuration if no other job
can be accommodated i.e., for every $j'$ $s+e_{j'}$ (where $e_{j'}$
is the unit vector along $j'$) violates at least one of the resource
constraints. Let $\mathcal{C}_{m}^{*}$ be the convex hull of the maximal
configurations of server $m$. Let $\mathcal{C}_{m}=\{s\in\left(\mathbb{R}_{+}\right)^{J}:s\leq s^{*} \textnormal{ for some } s^{*}\in\mathcal{C}_{m}^{*}\}$.
Here $s\leq s^{*}$ means $s_{j}\leq s_{j}^{*}\forall j\in\{1,2,...,J\}$.
$\mathcal{C}_{m}$ can be thought of as the capacity region for server
$m$. Note that if $\lambda\in interior(\mathcal{C}_{m})$, there exists an $\epsilon>0$ such that $\lambda(1+\epsilon)\in\mathcal{C}_{m}$.
$\mathcal{C}_{m}$ is a convex polytope in the nonnegative quadrant
of $\mathbb{R}^{J}$.

Define $\mathcal{C}= \underset{m=1}{\overset{M}{\sum}}\mathcal{C}_{m} = \{s\in\left(\mathbb{R}_{+}\right)^{J}:\exists s^{m}\in\mathcal{C}_{m} \ \forall \ m \textrm{ s.t. }s\leq\overset{M}{\underset{m=1}{\sum}}s^{m}\}$.
We denote this as $\mathcal{C}=\underset{m=1}{\overset{M}{\sum}}\mathcal{C}_{m}$.
Here $s^m$ just denotes an element in $\mathcal{C}_{m}$ and not $m^\textnormal{th}$ power of $s$.
Then, $\mathcal{C}=\underset{m=1}{\overset{M}{\sum}}\mathcal{C}_{m}$,
where $\sum$ denotes the Minkowski sum of sets. So, $\mathcal{C}$
 is again a convex polytope in the nonnegative quadrant of $\mathbb{R}^{J}$.
So, $\mathcal{C}$ can be described by a set of hyperplanes as follows:\[
\mathcal{C}=\{s\geq0:\left\langle c^{(k)},s\right\rangle \leq b^{(k)},k=1,...K\}\]
where $K$ is the number of hyperplanes that completely defines $\mathcal{C}$,
and $(c^{(k)},b^{(k)})$ completely defines the $k^{th}$ hyperplane
$\mathcal{H}^{(k)}$, $\left\langle c^{(k)},s\right\rangle = b^{(k)}$. Since $\mathcal{C}$ is in the first quadrant, we
have \[
||c^{(k)}||=1\quad,c^{(k)}\geq0,\quad b^{(k)}\geq0\quad for\: k=1,2,...K.\]

It was shown in \cite{magsriyin12} that $\mathcal{C}$ is the capacity region of this system. Similar to $\mathcal{C}$, define $\mathcal{S}=\underset{m=1}{\overset{M}{\sum}}\mathcal{S}_{m}$.

\begin{lem}
\label{lem:boundary}\textup{Given the $k^{th}$ hyperplane $\mathcal{H}^{(k)}$
of the capacity region $\mathcal{C}$ (i.e., $\left\langle c^{(k)},\lambda\right\rangle =b^{(k)}$),
for each server $m$, there is a $b_{m}^{(k)}$ }such that \textup{$\left\langle c^{(k)},\lambda\right\rangle =b_{m}^{(k)}$
is the boundary of the capacity region $\mathcal{C}_{m}$, and $b^{(k)}=\overset{M}{\underset{m=1}{\sum}}b_{m}^{(k)}$}
. Moreover, for \textup{every set $\left\{ \lambda_{m}^{(k)}\in\mathcal{C}_{m}\right\} _{m}$
such that $\lambda^{(k)}=\overset{M}{\underset{m=1}{\sum}}\lambda_{m}^{(k)}$
and $\lambda^{(k)}\in\mathcal{C}$ lies on the $k^{th}$ hyperplane
$\mathcal{H}^{(k)}$ , we have that $\left\langle c^{(k)},\lambda_{m}^{(k)}\right\rangle =b_{m}^{(k)}$.}\end{lem}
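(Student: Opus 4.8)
The plan is to exploit the fact that $\mathcal{C}$ is the Minkowski sum of the polytopes $\mathcal{C}_m$, and that a supporting hyperplane of a Minkowski sum decomposes into supporting hyperplanes of the summands with the same normal vector. Fix the $k$th hyperplane $\mathcal{H}^{(k)}$ with normal $c^{(k)}$ and offset $b^{(k)}$. First I would define, for each $m$,
\begin{equation}
b_m^{(k)} := \max_{s \in \mathcal{C}_m} \left\langle c^{(k)}, s \right\rangle,
\end{equation}
which is attained since $\mathcal{C}_m$ is a compact convex polytope. The key identity is that for Minkowski sums the support function is additive: $\max_{s \in \mathcal{C}} \langle c^{(k)}, s\rangle = \sum_{m=1}^M \max_{s^m \in \mathcal{C}_m} \langle c^{(k)}, s^m\rangle$. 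The left-hand side equals $b^{(k)}$ because $\mathcal{H}^{(k)}$ is a face-defining hyperplane of $\mathcal{C}$ (so $\langle c^{(k)}, s\rangle \le b^{(k)}$ on $\mathcal{C}$ with equality somewhere), and the right-hand side is $\sum_m b_m^{(k)}$ by definition. This gives $b^{(k)} = \sum_{m=1}^M b_m^{(k)}$. I should also note that $\langle c^{(k)}, \cdot\rangle = b_m^{(k)}$ is genuinely the boundary (a supporting hyperplane) of $\mathcal{C}_m$: since $c^{(k)} \ge 0$ and $c^{(k)} \ne 0$, and $\mathcal{C}_m$ is full-dimensional in the nonnegative quadrant with $\mathcal{C}_m = \{s : s \le s^* \text{ for some } s^* \in \mathcal{C}_m^*\}$ downward-closed, the maximizing face is nontrivial and the hyperplane touches $\mathcal{C}_m$.

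For the second claim, take any decomposition $\lambda^{(k)} = \sum_m \lambda_m^{(k)}$ with $\lambda_m^{(k)} \in \mathcal{C}_m$ and $\lambda^{(k)} \in \mathcal{H}^{(k)}$, i.e. $\langle c^{(k)}, \lambda^{(k)}\rangle = b^{(k)}$. Then
\begin{equation}
b^{(k)} = \left\langle c^{(k)}, \lambda^{(k)} \right\rangle = \sum_{m=1}^M \left\langle c^{(k)}, \lambda_m^{(k)} \right\rangle \le \sum_{m=1}^M b_m^{(k)} = b^{(k)},
\end{equation}
where the inequality is termwise from the definition of $b_m^{(k)}$. Since the sum of the terms $\langle c^{(k)}, \lambda_m^{(k)}\rangle$ equals the sum of their termwise upper bounds $b_m^{(k)}$, each inequality $\langle c^{(k)}, \lambda_m^{(k)}\rangle \le b_m^{(k)}$ must be an equality. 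Hence $\langle c^{(k)}, \lambda_m^{(k)}\rangle = b_m^{(k)}$ for every $m$, as claimed.

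I do not expect any serious obstacle here; the lemma is essentially the statement that supporting hyperplanes of a Minkowski sum restrict to supporting hyperplanes of the summands, plus a tightness argument. The one point that needs a little care is verifying that the hyperplane $\langle c^{(k)}, \cdot\rangle = b_m^{(k)}$ really is a \emph{boundary} hyperplane of $\mathcal{C}_m$ (not, say, strictly outside it or meeting it only trivially) — this follows from compactness of $\mathcal{C}_m$ (so the max is attained on $\mathcal{C}_m$) together with $c^{(k)} \ge 0$, $c^{(k)} \ne 0$, and $\mathcal{C}_m$ having nonempty interior in the nonnegative quadrant, which forces the maximizing set to be a proper nonempty face. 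The additivity of support functions over Minkowski sums is a standard fact and can be invoked directly, or proved in one line: $\langle c, x+y\rangle = \langle c,x\rangle + \langle c,y\rangle$ and one optimizes each summand independently.
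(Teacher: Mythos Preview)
Your proposal is correct and follows essentially the same approach as the paper: define $b_m^{(k)}=\max_{s\in\mathcal{C}_m}\langle c^{(k)},s\rangle$, use additivity of support functions over the Minkowski sum to get $b^{(k)}=\sum_m b_m^{(k)}$, and then force termwise equality from $\sum_m\langle c^{(k)},\lambda_m^{(k)}\rangle=\sum_m b_m^{(k)}$ with $\langle c^{(k)},\lambda_m^{(k)}\rangle\le b_m^{(k)}$. The only cosmetic difference is that the paper phrases the last step as a contradiction (if one term were strict, another would have to exceed its bound), whereas you argue directly; your added care about the supporting-hyperplane interpretation is fine but not needed for the lemma as used later.
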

\begin{proof}
Define $b_{m}^{(k)}=\underset{s\in\mathcal{C}_{m}}{\max}\left\langle c^{(k)},s\right\rangle $.
Then, since \\${\mathcal{C}=\underset{m=1}{\overset{M}{\sum}}\mathcal{C}_{m}}$,
we have that $b^{(k)}=\overset{M}{\underset{m=1}{\sum}}b_{m}^{(k)}$.

Again, by the definition of $\mathcal{C}$, for every $\lambda\in\mathcal{C}$,
there are $\lambda_{m}^{(k)}\in\mathcal{C}_{m}$ for each $m$ such
that $\lambda^{(k)}=\overset{M}{\underset{m=1}{\sum}}\lambda_{m}^{(k)}$.
However, these may not be unique. We will prove that for every such
$\left\{ \lambda_{m}^{(k)}\right\} _{m}$, for each $m$, $\left\langle c^{(k)},\lambda_{m}^{(k)}\right\rangle =b_{m}^{(k)}$.
Suppose, for some server $m_{1}$, $\left\langle c^{(k)},\lambda_{m_{1}}^{(k)}\right\rangle <b_{m_{1}}^{(k)}$.
Then since $\left\langle c^{(k)},\underset{m=1}{\overset{M}{\sum}}\lambda_{m}^{(k)}\right\rangle =\underset{m=1}{\overset{M}{\sum}}b_{m}^{(k)}$,
there exists $m_{2}$ such that \\${\left\langle c^{(k)},\lambda_{m_{2}}^{(k)}\right\rangle >b_{m_{2}}^{(k)}}$
which is a contradiction. Thus, we have the lemma.
\end{proof}

\section{JSQ Routing and MaxWeight \\Scheduling}\label{sec:JSQMaxwt}

In this section, we will study the performance of JSQ routing with MaxWeight scheduling, as described in Algorithm \ref{alg:JSQMaxwt}.

\begin{algorithm}

\caption{\label{alg:JSQMaxwt}JSQ Routing and MaxWeight Scheduling}

\begin{enumerate}
\item \emph{Routing Algorithm:} All the type $j$ arrivals in a time slot
are routed to the server with the smallest queue  length for type $j$ jobs,
i.e., the server $m_{j}^{*}=\underset{m\in\{1,2,...M\}}{\arg\min}q_{j,m}$. Ties are broken uniformly at random.
\item \emph{Scheduling Algorithm:} In each time slot, server $m$ chooses
a configuration $\overline{s^{m}}\in\mathcal{C}_{m}^{*}$ so that $\overline{s^{m}}=\underset{\overline{s^{m}}\in\mathcal{C}_{m}^{*}}{\arg\max}\underset{j=1}{\overset{J}{\sum}}\overline{s_{j}^{m}}q_{j,m}$.
It then schedules up to a maximum of $\overline{s_{j}^{m}}$ jobs of type $j$ (in a preemptive manner). Note that even if the queue length is greater than the allocated service, all of it may not be utilized, e.g., when the backlogged size is from a single job, since different chunks of the same job cannot be scheduled simultaneously. Denote the actual number of jobs chosen by $s_{j}^{m}$. Note that if $q_{j,m}\geq D_{max}s_{max}$, then  $\overline{s_{j}^{m}}= s_{j}^{m}$.
%
\end{enumerate}

\end{algorithm}

Let ${Y}_{j,m}(t)$ denote the state of the queue for type-$j$
jobs at server $m$, where ${Y}_{j,m}^{i}(t)$ is the (backlogged) size of the $i^{{\rm th}}$
type-$j$ job at server $m$. It is easy to see that $\mathbf{Y}(t)=\{{Y}_{j,m}(t)\}_{j,m}$
is a Markov chain under the JSQ routing and MaxWeight scheduling. Then, $q_{j,m}(t)=\sum_{i}{Y}_{j,m}^{i}(t)$ is a function of the state ${Y}_{j,m}(t)$.

The queue lengths of workload evolve according to the following equation:
\begin{align}
q_{j,m}(t+1) & =q_{j,m}(t)+a_{j,m}(t)-s_{j}^{m}(t)\nonumber \\
 & =q_{j,m}(t)+a_{j,m}(t)-\overline{s_{j}^{m}}(t)+\overline{u}_{j,m}(t)\label{eq:q}\end{align}
where $\overline{u}_{j,m}(t)$ is the unused service, given by $\overline{u}_{j,m}(t) = \overline{s_{j}^{m}}(t)-s_{j}^{m}(t)$,  $\overline{s_{j}^{m}}(t)$ is the MaxWeight schedule and $s_{j}^{m}(t)$ is the actual schedule chosen by the scheduling algorithm and the arrivals are\begin{equation}
a_{j,m}(t)=\begin{cases}
a_{j}(t) & \textrm{ if }m=m_{j}^{*}(t)\\
0 & \textrm{otherwise}\end{cases}.\label{eq:arrival}\end{equation}
Here, $m_{j}^{*}$ is the server
chosen by the routing algorithm for type $j$ jobs. Note that \begin{equation}
\overline{u}_{j,m}(t)=0\textrm{ when }q_{j,m}(t)+a_{j,m}(t)\geq D_{max}s_{max}.\label{eq:unused}\end{equation}
Also, denote $s=(s_{j})_{j}$ where
\begin{equation} s_{j}=\underset{m=1}{\overset{M}{\sum}}s_{j}^{m}. \label{eq:s_j}
\end{equation}
Denote $\mathbf{a}=(a_{j,m})_{j,m}$, $\mathbf{s}=(s_{j}^{m})_{j,m}$
and $\mathbf{\overline{u}}=(\overline{u}_{j,m})_{j,m}$. Also denote $\mathbf{1}$ to be the vector with $1$ in all components.

 It was shown in \cite{magsriyin12} that this algorithm is throughput optimal. Here, we will show that this algorithm is heavy traffic optimal.

Recall that the capacity region is bounded by $K$ hyperplanes, each
hyperplane $\mathcal{H}^{(k)}$ described by its normal vector $c^{(k)}$
and the value $b^{(k)}$. Then, for any $\lambda\in interior(\mathcal{C})$,
we can define the distance of $\lambda$ to $\mathcal{H}^{(k)}$ and
the closest point, respectively, as\begin{align}
\epsilon^{(k)} & =\underset{s\in\mathcal{H}^{(k)}}{\min}||\lambda-s||\label{eq:epsilon}\\
\lambda^{(k)} & =\lambda+\epsilon^{(k)}c^{(k)}\nonumber\end{align}
where $\epsilon^{(k)}>0$ for each $k$ since $\lambda\in interior(\mathcal{C})$.
We let $\epsilon\triangleq\left(\epsilon^{(k)}\right)_{k=1}^{K}$
denote the vector of distances to all hyperplanes. Note that $\lambda^{(k)}$
may be outside the capacity region $\mathcal{C}$ for some hyperplanes.
So define\[
\mathcal{K}_{\lambda}\triangleq\left\{ k\in\{1,2,...K\}:\lambda^{(k)}\in\mathcal{C}\right\} \]
 $\mathcal{K}_{\lambda}$ identifies the set of \emph{dominant hyperplanes}
whose closest point to $\lambda$ is on the boundary of the capacity
region $\mathcal{C}$ hence is a feasible average rate for service.
Note that for any $\lambda\in interior(\mathcal{C})$, the set $\mathcal{K}_{\lambda}$
is non-empty, and hence is well-defined. We further define \[
\mathcal{K}_{\lambda}^{o}\triangleq\left\{ k\in\mathcal{K}_{\lambda}^{}:\lambda^{(k)}\in Relint(\mathcal{F}^{(k)})\right\} \]
where $\mathcal{F}^{(k)}$ denotes the face on which $\lambda^{(k)}$
lies and $Relint$ means relative interior. Thus, $\mathcal{K}_{\lambda}^{o}$
is the subset of faces in $\mathcal{K}_{\lambda}$ for which the projection
of $\lambda$ is not shared by more than one hyperplane.

For $\epsilon\triangleq\left(\epsilon^{(k)}\right)_{k=1}^{K}>0$,
let $\lambda^{(\epsilon)}$ be the arrival rate in the interior of the capacity region so that its distance from the hyperplane $\mathcal{H}^{(k)}$
is $\epsilon^{(k)}$. Let $\lambda^{(k)}$ be the closest point to
$\lambda^{(\epsilon)}$ on $\mathcal{H}^{(k)}$. Thus, we have \begin{equation}
\lambda^{(k)}=\lambda^{(\epsilon)}+\epsilon^{(k)}c^{(k)}.\label{eq:lambda}\end{equation}
Let $\mathbf{q}^{(\epsilon)}(t)$ be the queue length
process when the arrival rate is $\lambda^{(\epsilon)}$.

Define $\mathbf{c}^{(k)}\in\mathbb{R}_{+}^{JM}$, indexed
by $j,m$ as $\mathbf{c}_{j,m}=\frac{c_{j}}{\sqrt{M}}$. We expect that the state space collapse occurs along the direction $\mathbf{c}^{(k)}$. This is intuitive. For a fixed $j$, JSQ routing tries to equalize the queue lengths across servers. For a fixed server $m$, we expect that the state space collapse occurs along $c^{(k)}$ when approaching the hyperplane $\mathcal{H}^{(k)}$, as shown in \cite{erysri-heavytraffic}. Thus, for JSQ routing and MaxWeight, we expect that the state space collapse occurs along $\mathbf{c}^{(k)}$ in $\mathbb{R}^{JM}$.

For each $k\in\mathcal{K}_{\lambda^{(\epsilon)}}^{o},$ define the projection
and perpendicular component of $\mathbf{q}^{(\epsilon)}$ to the
vector $\mathbf{c}^{(k)}$ as follows:\begin{align*}
\mathbf{q}_{||}^{(\epsilon,k)} & \triangleq\left\langle \mathbf{c}^{(k)},\mathbf{q}^{(\epsilon)}\right\rangle \mathbf{c}^{(k)}\\
\mathbf{q}_{\bot}^{(\epsilon,k)} & \triangleq\mathbf{q}^{(\epsilon)}-\mathbf{q}_{||}^{(\epsilon,k)}\end{align*}

In this section, we will prove the following proposition.

\begin{prop}\label{prop:main}
Consider the cloud computing system described in Section \ref{sec:model}. Assume all the servers are identical, i.e., $R_{i,m}=R_i$ for all servers $m$ and resources $i$ and that
JSQ routing and MaxWeight scheduling as described in Algorithm \ref{alg:JSQMaxwt} is used. Let
the exogenous arrival rate be $\lambda^{(\epsilon)}\in Interior(\mathcal{C})$
and the standard deviation of the arrival vector be $\sigma^{(\epsilon)}\in \mathbb{R}^J_{++}$ where the
parameter $\epsilon=\left(\epsilon^{(k)}\right)_{k=1}^{K}$ is so
that $\epsilon^{(k)}$ is the distance of $\lambda^{(\epsilon)}$
from the $k^{th}$ hyperplane $\mathcal{H}^{(k)}$ as defined in (\ref{eq:epsilon}). Then  for each $k\in\mathcal{K}_{\lambda^{(\epsilon)}}^{o}$, the steady state queue length satisfies
\begin{align*}
\epsilon^{(k)}\mathbb{E}\left[\left\langle \mathbf{c}^{(k)},\mathbf{q}(t)\right\rangle \right] & \leq  \frac{\zeta^{(\epsilon,k)}}{2}+B_{2}^{(\epsilon,k)}\end{align*}
where $\zeta^{(\epsilon,k)}=\frac{1}{\sqrt{M}}\left\langle \left(c^{(k)}\right)^{2},\left(\sigma^{(\epsilon)}\right)^{2}\right\rangle +\frac{\left(\epsilon^{(k)}\right)^{2}}{\sqrt{M}}$, $B_{2}^{(\epsilon,k)}$ is $o(\frac{1}{{\epsilon}^{(k)}})$

In the heavy traffic limit as $\epsilon^{(k)}\downarrow0$,
this bound is tight, i.e.,
\begin{equation*}
\underset{\epsilon^{(k)}\downarrow0}{\lim}\epsilon^{(k)}\mathbb{E}\left[\left\langle \mathbf{c}^{(k)},\mathbf{q}^{(\epsilon)}\right\rangle \right]=\frac{\zeta^{(k)}}{2}\end{equation*}
where  $\zeta^{(k)}=\frac{1}{\sqrt{M}}\left\langle \left(c^{(k)}\right)^{2},\left(\sigma\right)^{2}\right\rangle $.
\end{prop}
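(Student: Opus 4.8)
The plan is to follow the three-step recipe of \cite{erysri-heavytraffic} adapted to the two-stage (route-then-schedule) system. First, for the \emph{lower bound}, I would project the queue-length vector onto the direction $\mathbf{c}^{(k)}$ and compare the resulting one-dimensional process with a single-server queue fed by the aggregate arrival work $\langle c^{(k)}, a(t)\rangle$ and served at rate at most $b^{(k)}$. Using the hyperplane inequality $\langle c^{(k)}, s\rangle \le b^{(k)}$ for any feasible total schedule $s\in\mathcal C$ (which holds here by Lemma~\ref{lem:boundary} summed over servers), the drift of $\langle \mathbf{c}^{(k)}, \mathbf{q}^{(\epsilon)}(t)\rangle$ is bounded below by that of a single-server queue with load $\epsilon^{(k)}$-close to capacity, and a Kingman-type bound on that queue gives $\epsilon^{(k)}\mathbb{E}[\langle \mathbf{c}^{(k)}, \mathbf{q}^{(\epsilon)}\rangle] \ge \zeta^{(\epsilon,k)}/2 - (\text{lower-order terms})$. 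Since all servers are identical, the normalization $\mathbf{c}_{j,m} = c_j/\sqrt M$ makes the variance term come out to $\frac{1}{\sqrt M}\langle (c^{(k)})^2, (\sigma^{(\epsilon)})^2\rangle$ as claimed; the $\epsilon$-correction $\frac{(\epsilon^{(k)})^2}{\sqrt M}$ arises from the drift-squared contribution in the Kingman bound.

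Second, the \emph{state-space collapse} step: I would show that $\mathbb{E}[\|\mathbf{q}_\bot^{(\epsilon,k)}\|]$ is bounded uniformly in $\epsilon^{(k)}$ (in fact, $\mathbb{E}[\|\mathbf{q}_\bot^{(\epsilon,k)}\|^r]$ bounded for the moments needed). This is done via a Lyapunov argument with $W_\bot(\mathbf{q}) = \|\mathbf{q}_\bot^{(\epsilon,k)}\|$: one shows that whenever $\|\mathbf{q}_\bot\|$ is large, its conditional drift is negative, bounded away from zero by a constant independent of $\epsilon$. Here is where the two-stage structure matters and where the argument departs from \cite{erysri-heavytraffic}: the perpendicular component has a part that measures imbalance \emph{across servers} for a fixed job type (controlled by JSQ routing, which pushes arrivals toward the shortest queue and hence contracts this component) and a part that measures, at a fixed server, deviation from the $c^{(k)}$ direction (controlled by MaxWeight scheduling exactly as in \cite{erysri-heavytraffic}). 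I would combine these by showing MaxWeight's schedule has positive correlation with the full perpendicular queue vector, and that JSQ's routing decision $a_{j,m}(t)$ never increases the cross-server imbalance in expectation by more than an additive constant. Then a standard geometric-drift lemma (e.g. Lemma from \cite{erysri-heavytraffic}) yields the bounded-moment conclusion.

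Third, the \emph{upper bound}: set the Lyapunov function $V(\mathbf{q}) = \langle \mathbf{c}^{(k)}, \mathbf{q}\rangle^2$ and set its expected drift in steady state to zero. Expanding, the drift has a negative term $-2\epsilon^{(k)}\mathbb{E}[\langle \mathbf{c}^{(k)}, \mathbf{q}\rangle]$ coming from the gap between $\lambda^{(\epsilon)}$ and the hyperplane, a variance-type term that converges to $\zeta^{(k)}$, and cross terms involving the unused service $\overline{\mathbf u}(t)$ and the perpendicular component. The key cancellations: the unused-service cross term $\mathbb{E}[\langle \mathbf{c}^{(k)}, \mathbf{q}\rangle \langle \mathbf{c}^{(k)}, \overline{\mathbf u}\rangle]$ is controlled because $\overline u_{j,m}(t)>0$ forces $q_{j,m}$ small (bounded by $D_{max}s_{max}$, cf. (\ref{eq:unused})), so this term is $O(\mathbb{E}[\|\mathbf q\|])\cdot O(1)$ but when combined with the $\epsilon^{(k)}$ prefactor it is $o(1/\epsilon^{(k)})$ after dividing through; similarly the term coupling $\mathbf{q}_{||}$ and $\mathbf{q}_\bot$ is bounded using the state-space-collapse moment bounds. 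Rearranging gives $\epsilon^{(k)}\mathbb{E}[\langle \mathbf{c}^{(k)}, \mathbf{q}\rangle] \le \zeta^{(\epsilon,k)}/2 + B_2^{(\epsilon,k)}$ with $B_2^{(\epsilon,k)} = o(1/\epsilon^{(k)})$.

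Finally, combining the lower bound and the upper bound, taking $\epsilon^{(k)}\downarrow 0$ so that $\zeta^{(\epsilon,k)}\to\zeta^{(k)}$, $\sigma^{(\epsilon)}\to\sigma$, and the $o(1/\epsilon^{(k)})$ and $(\epsilon^{(k)})^2/\sqrt M$ correction terms vanish after multiplication by $\epsilon^{(k)}$, yields $\lim_{\epsilon^{(k)}\downarrow 0}\epsilon^{(k)}\mathbb{E}[\langle \mathbf{c}^{(k)}, \mathbf{q}^{(\epsilon)}\rangle] = \zeta^{(k)}/2$. I expect the \textbf{main obstacle} to be the state-space collapse step: one must simultaneously handle the cross-server balancing effect of JSQ and the within-server directional collapse of MaxWeight, and show their perpendicular components do not interfere destructively — establishing the uniform negative drift of $\|\mathbf{q}_\bot^{(\epsilon,k)}\|$ requires a careful case analysis (large imbalance vs. large within-server deviation) that is genuinely new relative to the single-stage MaxWeight analysis in \cite{erysri-heavytraffic}.
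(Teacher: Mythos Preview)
Your three-step outline matches the paper's structure, and the lower bound is essentially identical. For state-space collapse, however, the paper avoids the case analysis you anticipate. It uses the pointwise bound $\Delta W_\bot^{(k)}(\mathbf{q}) \le \frac{1}{2\|\mathbf{q}_\bot^{(k)}\|}\bigl(\Delta V(\mathbf{q}) - \Delta V_{||}^{(k)}(\mathbf{q})\bigr)$ and handles the two drifts in one stroke. JSQ enters only through the single inequality $\mathbb{E}\bigl[\sum_m q_{j,m}a_{j,m}\bigr] = q_{j,m_j^*}\lambda_j^{(\epsilon)} \le \lambda_j^{(\epsilon)}\cdot\tfrac{1}{M}\sum_m q_{j,m}$ (the shortest queue is below the average), which reduces the arrival contribution in $\Delta V$ to what perfectly balanced routing would give. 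Then MaxWeight plus the identical-server assumption places $\lambda^{(k)}/M$ in the relative interior of each server's face $\mathcal{F}_m^{(k)}$, so one may minimize over a fixed $\delta^{(k)}$-ball within that face at every server and obtain $\mathbb{E}[\Delta V] \le -2\delta^{(k)}\|\mathbf{q}_\bot^{(k)}\| + K_1 - \tfrac{2\epsilon^{(k)}}{\sqrt M}\|\mathbf{q}_{||}^{(k)}\|$; the $\epsilon^{(k)}$ term cancels against the matching term in $\Delta V_{||}^{(k)}$, leaving a clean $-\delta^{(k)}$ drift for $W_\bot^{(k)}$. No decomposition of $\mathbf{q}_\bot$ into cross-server and within-server pieces is needed.

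On the upper bound there is a gap in your sketch. Setting the drift of $\langle \mathbf{c}^{(k)},\mathbf{q}\rangle^2$ to zero does not directly produce the term $-2\epsilon^{(k)}\mathbb{E}[\langle\mathbf{c}^{(k)},\mathbf{q}\rangle]$, because $\langle\mathbf{c}^{(k)},\overline{\mathbf{s}}(t)\rangle$ equals $b^{(k)}/\sqrt M$ only when the MaxWeight schedule lies on the face $\mathcal{F}^{(k)}$. The residual term $\mathbb{E}\bigl[\|\mathbf{q}_{||}^{(k)}\|\,(b^{(k)}/\sqrt M - \langle\mathbf{c}^{(k)},\overline{\mathbf{s}}\rangle)\bigr]$ involves $\|\mathbf{q}_{||}^{(k)}\|$, which blows up as $\epsilon^{(k)}\downarrow 0$, so state-space collapse alone does not control it and it is not one of the ``$\mathbf{q}_{||}$--$\mathbf{q}_\bot$ coupling'' terms you mention. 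The paper disposes of it with two further ingredients: (i) a cone argument --- there is a fixed angle $\theta^{(k)}$ such that whenever $\mathbf{q}$ lies within $\theta^{(k)}$ of $\mathbf{c}^{(k)}$ the MaxWeight schedule is on the face, so on the complementary event one has $\|\mathbf{q}_{||}^{(k)}\|\le \cot(\theta^{(k)})\|\mathbf{q}_\bot^{(k)}\|$; and (ii) the steady-state off-face probability bound $\mathbb P(\langle c^{(k)},\overline{s}\rangle\neq b^{(k)})\le \epsilon^{(k)}/\gamma^{(k)}$, combined with Cauchy--Schwarz. Together these make the residual term $O(\sqrt{\epsilon^{(k)}})$ and hence absorbable into $B_2^{(\epsilon,k)}$.
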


We will prove this proposition by following the three step procedure described in Section \ref{sec:intro}, by first obtaining a lower bound, then showing state space collapse and finally using the state space collapse result to obtain an upper bound.

\subsection{Lower Bound}\label{sub:lower_bound}

Since $\lambda^{(\epsilon)}$ is in the interior of $\mathcal{C}$,
the process $\left\{ \mathbf{q}^{(\epsilon)}(t)\right\} _{t}$
has a steady state distribution. We will obtain a lower bound on $\mathbb{E}\left[\left\langle \mathbf{c}^{(k)},\mathbf{q}^{(\epsilon)}\right\rangle \right]=\mathbb{E}\left[\underset{j=1}{\overset{J}{\sum}}\frac{c_{j}^{(k)}}{\sqrt{M}}\left(\overset{M}{\underset{m=1}{\sum}}q_{jm}\right)\right]$ in steady state as follows.

Consider the single server queuing system, $\phi^{(\epsilon)}(t)$
with arrival process $\frac{1}{\sqrt{M}}\left\langle c^{(k)},a^{(\epsilon)}(t)\right\rangle $ and
service process given by $\frac{b^{(k)}}{\sqrt{M}}$ at each time
slot. Then $\phi(t)$ is stochastically smaller than $\left\langle \mathbf{c}^{(k)},\mathbf{q}(t)^{(\epsilon)}\right\rangle $.
Thus, we have \[\mathbb{E}\left[\left\langle \mathbf{c}^{(k)},\mathbf{q}^{(\epsilon)}\right\rangle \right]\geq\mathbb{E}\left[\phi^{(\epsilon)}\right].\]
Using  $\phi^{2}$ as Lyapunov function for the single server queue and noting that the drift of it should be zero in steady state, one can bound $\mathbb{E}\left[\overline{\phi}^{(\epsilon)}\right]$ as follows \cite{erysri-heavytraffic}
\[
\epsilon^{(k)}\mathbb{E}\left[\overline{\phi}^{(\epsilon)} \right]\geq\frac{\zeta^{(\epsilon,k)}}{2}-B_{1}^{(\epsilon,k)}.\]
where $\left(c^{(k)}\right)^{2}=\left(\left(c_{j}^{(k)}\right)^{2}\right)_{j=1}^{J}$, $B_{1}^{(\epsilon,k)}=\frac{b^{(k)}\epsilon^{(k)}}{2}$
and
$\zeta^{(\epsilon,k)}=\frac{1}{\sqrt{M}}\left\langle \left(c^{(k)}\right)^{2},\left(\sigma^{(\epsilon)}\right)^{2}\right\rangle +\frac{\left(\epsilon^{(k)}\right)^{2}}{\sqrt{M}}$.

Thus, in the heavy traffic limit as $\epsilon^{(k)}\downarrow0$,
we have that \begin{equation}
\underset{\epsilon^{(k)}\downarrow0}{\lim}\epsilon^{(k)}\mathbb{E}\left[\left\langle \mathbf{c}^{(k)},\overline{\mathbf{q}}^{(\epsilon)}\right\rangle \right]\geq\frac{\zeta^{(k)}}{2}\label{eq:lower_bound}\end{equation}
where $\zeta^{(k)}=\frac{1}{\sqrt{M}}\left\langle \left(c^{(k)}\right)^{2},\left(\sigma\right)^{2}\right\rangle $.

\subsection{State Space Collapse}\label{sub:jsqmw_state_space}

In this subsection, we will show that there is a state space collapse along the direction $\mathbf{c}^{(k)}$. We know that as the arrival rate approaches the boundary of the capacity region, i.e., $\epsilon^{(k)}\rightarrow0$, the steady state mean queue length $\mathbb{E}[||\mathbf{q}||]\rightarrow \infty$. We will show that as $\epsilon^{(k)}\rightarrow0$, queue length projected along any direction perpendicular to $\mathbf{c}^{(k)}$ is bounded. So the constant does not contribute to the  first order term in $\frac{1}{\epsilon^{(k)}}$, in which we are interested. Therefore, it is sufficient to study a bound on the queue length along  $\mathbf{c}^{(k)}$. This is called state-space collapse.

Define the following Lyapunov functions.
\begin{align*}
V(\mathbf{q})&\triangleq\overset{M}{\underset{m=1}{\sum}}\underset{j=1}{\overset{J}{\sum}}q_{j,m}^{2},\;
  W_{\bot}^{(k)}(\mathbf{q})\triangleq\left\Vert \mathbf{q}_{\bot}^{(k)}\right\Vert, \;
   W_{||}^{(k)}(\mathbf{q})\triangleq\left\Vert \mathbf{q}_{||}^{(k)}\right\Vert \\
V_{||}^{(k)}(\mathbf{q})&\triangleq\left\langle \mathbf{c}^{(k)},\mathbf{q}^{(\epsilon)}\right\rangle ^{2}=\left\Vert \mathbf{q}_{||}^{(k)}\right\Vert ^{2}=\frac{1}{M}\left(\overset{M}{\underset{m=1}{\sum}}\underset{j=1}{\overset{J}{\sum}}q_{j,m}c_{j}\right)^{2}.
\end{align*}
Define the drift of the above Lyapunov functions.
 \begin{align*}
\Delta V(\mathbf{q}) & \triangleq\left[V(\mathbf{q}(t+1))-V(\mathbf{q}(t))\right]\mathcal{I}(\mathbf{q}(t)=\mathbf{q})\\
\Delta W_{\bot}^{(k)}(\mathbf{q}) & \triangleq\left[W_{\bot}^{(k)}(\mathbf{q}(t+1))-W_{\bot}^{(k)}(\mathbf{q}(t))\right]\mathcal{I}(\mathbf{q}(t)=\mathbf{q})\\
\Delta W_{||}^{(k)}(\mathbf{q}) & \triangleq\left[W_{||}^{(k)}(\mathbf{q}(t+1))-W_{||}^{(k)}(\mathbf{q}(t))\right]\mathcal{I}(\mathbf{q}(t)=\mathbf{q})\\
\Delta V_{||}^{(k)}(\mathbf{q}) & \triangleq\left[V_{||}^{(k)}(\mathbf{q}(t+1))-V_{||}^{(k)}(\mathbf{q}(t))\right]\mathcal{I}(\mathbf{q}(t)=\mathbf{q})
\end{align*}

To show the state space collapse happens along the direction
of $\mathbf{c}^{(k)}$, we will need a result by Hajek \cite{hajek_drift}, which gives a bound on $\left\Vert \mathbf{q}_{\bot}^{(k)}\right\Vert $ if the drift of $W_{\bot}^{(k)}(\mathbf{q})$ is negative. Here we use the following special case of the result by Hajek, as presented in \cite{erysri-heavytraffic}.

\begin{lem} \label{lem:Hajek}
For an irreducible and aperiodic Markov Chain $\{X[t]\}_{t\geq 0}$ over a countable state space $\mathcal{X},$ suppose $Z:\mathcal{X}\rightarrow \mathbb{R}_+$ is a nonnegative-valued Lyapunov function. We define the drift of $Z$ at $X$ as $$\Delta Z(X) \triangleq [Z(X[t+1])-Z(X[t])]\>\mathcal{I}(X[t]=X),$$ where $\mathcal{I}(.)$ is the indicator function. Thus, $\Delta Z(X)$ is a random variable that measures the amount of change in the value of $Z$ in one step, starting from state $X.$ This drift is assumed to satisfy the following conditions:
\begin{enumerate}
\item There exists an $\eta>0,$ and a $\kappa<\infty$ such that for all $X\in \mathcal{X}$ with $ Z(X)\geq \kappa,$
 \begin{eqnarray*}
    \mathbb{E}[\Delta Z(X) | X[t]=X]
    \leq - \eta .
 \end{eqnarray*}
\item There exists a $D < \infty$ such that for all $X \in \mathcal{X},$
  \begin{eqnarray*}
  \mathbb{P}\left(|\Delta Z(X)| \leq D\right) = 1.
  \end{eqnarray*}
\end{enumerate}
Then, there exists a $\theta^\star > 0$ and a $C^\star < \infty$ such that
$$\limsup_{t\rightarrow \infty} \mathbb{E}\left[e^{\theta^\star Z(X[t])}\right] \leq C^\star.$$
If we further assume that the Markov Chain $\{X[t]\}_t$ is positive recurrent, then $Z(X[t])$ converges in distribution to a random variable $\bar{Z}$ for which
$$ \mathbb{E}\left[e^{\theta^\star \bar{Z}}\right] \leq C^\star,$$
which directly implies that all moments of $\bar{Z}$ exist and are finite.
\end{lem}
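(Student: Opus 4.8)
The plan is to prove Lemma~\ref{lem:Hajek} by exhibiting a \emph{geometric} Foster--Lyapunov drift for the exponentiated function $g(X)\triangleq e^{\theta Z(X)}$, for a suitably small $\theta=\theta^\star$. Concretely, I would show that there are $\theta^\star>0$, $\rho\in(0,1)$ and $L<\infty$ such that
\[
\mathbb{E}\bigl[g(X[t+1])\mid X[t]=X\bigr]\;\le\;\rho\,g(X)+L\,\mathcal{I}\bigl(Z(X)<\kappa\bigr)\qquad\text{for all }X\in\mathcal{X},
\]
and then iterate this inequality to bound $\mathbb{E}[e^{\theta^\star Z[t]}]$ uniformly in $t$; positive recurrence together with weak convergence finally transfers the bound to the limiting random variable $\bar Z$.

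To obtain the one-step inequality I would split on the value of $Z(X)$. If $Z(X)\ge\kappa$, write $\mathbb{E}[e^{\theta Z[t+1]}\mid X[t]=X]=e^{\theta Z(X)}\,\mathbb{E}[e^{\theta\Delta Z(X)}\mid X[t]=X]$ and control the second factor using the bounded-increment hypothesis $|\Delta Z(X)|\le D$: from the elementary bound $e^{x}\le 1+x+\tfrac12 x^2 e^{|x|}$, valid for every real $x$, one gets
\[
\mathbb{E}\bigl[e^{\theta\Delta Z(X)}\mid X[t]=X\bigr]\;\le\;1+\theta\,\mathbb{E}[\Delta Z(X)\mid X[t]=X]+\tfrac12\theta^2 D^2 e^{\theta D}\;\le\;1-\theta\eta+\tfrac12\theta^2 D^2 e^{\theta D},
\]
where the last step invokes the negative-drift hypothesis. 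Since the right-hand side equals $1-\theta\eta+o(\theta)$ as $\theta\downarrow0$, there is a $\theta^\star>0$ (of order $\eta/D^2$) for which it is at most $\rho\triangleq 1-\tfrac12\theta^\star\eta<1$; hence $\mathbb{E}[g(X[t+1])\mid X[t]=X]\le\rho\,g(X)$ on $\{Z\ge\kappa\}$. If instead $Z(X)<\kappa$, the bounded increments force $Z[t+1]\le\kappa+D$ surely, so $e^{\theta^\star Z[t+1]}\le e^{\theta^\star(\kappa+D)}\triangleq L$. Combining the two cases gives the displayed drift inequality for all $X$.

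Taking expectations in the drift inequality and unrolling it over $t$ yields $\mathbb{E}[e^{\theta^\star Z[t]}]\le\rho^{\,t}\,\mathbb{E}[e^{\theta^\star Z[0]}]+\dfrac{L}{1-\rho}$; since $Z$ is real-valued the first term is finite and tends to $0$, so $\limsup_{t\to\infty}\mathbb{E}[e^{\theta^\star Z[t]}]\le C^\star\triangleq L/(1-\rho)$, which is the first assertion. If the chain is moreover positive recurrent, $X[t]$ converges in distribution to its stationary law, hence $Z(X[t])$ converges in distribution to $\bar Z$; since $z\mapsto e^{\theta^\star z}$ is continuous and nonnegative, the portmanteau inequality (equivalently, Fatou) gives $\mathbb{E}[e^{\theta^\star\bar Z}]\le\liminf_{t}\mathbb{E}[e^{\theta^\star Z[t]}]\le C^\star$. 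Finally, $z^{\,r}\le (r/(e\theta^\star))^{\,r}e^{\theta^\star z}$ for all $z\ge0$, so every moment $\mathbb{E}[\bar Z^{\,r}]\le(r/(e\theta^\star))^{\,r}C^\star$ is finite.

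This is in essence the classical exponential-Lyapunov / geometric-ergodicity computation, so I do not expect a deep obstacle; the one place needing care is the quantitative tuning of $\theta^\star$, which must be small enough that the first-order drift gain $-\theta\eta$ dominates the second-order increment penalty $\tfrac12\theta^2D^2e^{\theta D}$, and one must keep in mind that the negative-drift hypothesis is available only on $\{Z\ge\kappa\}$, which is precisely why the complementary set contributes merely an additive constant $L$ rather than spoiling the contraction factor $\rho$.
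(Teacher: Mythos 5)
Your proof is correct. Note that the paper itself does not prove this lemma: it is quoted verbatim as a known result of Hajek, in the special form presented in the cited Eryilmaz--Srikant paper, so there is no in-paper argument to compare against. Your exponential-Lyapunov computation (bound $\mathbb{E}[e^{\theta\Delta Z}]$ via $e^{x}\le 1+x+\tfrac12x^{2}e^{|x|}$ and the bounded-increment hypothesis, tune $\theta^{\star}$ so the linear drift term dominates, obtain a geometric contraction off the set $\{Z<\kappa\}$, iterate, and pass to the stationary limit by lower semicontinuity under weak convergence) is precisely the standard derivation underlying the cited result, and each step checks out; the only point worth making explicit is that the unrolled recursion requires $\mathbb{E}[e^{\theta^{\star}Z(X[0])}]<\infty$, which holds for any fixed initial state and is all that is needed for the $\limsup$ claim.
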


We also need Lemma 7 from \cite{erysri-heavytraffic}, which gives the drift of $W_{\bot}^{(k)}(\mathbf{q})$ in terms of drifts of $V(\mathbf{q})$ and $V_{||}^{(k)}(\mathbf{q})$.
\begin{lem}
Drift of $W_{\bot}^{(k)}$ can be bounded as follows:\begin{equation}
\Delta W_{\bot}^{(k)}(\mathbf{q})\leq\frac{1}{2\left\Vert \mathbf{q}_{\bot}^{(k)}\right\Vert }(\Delta V(\mathbf{q})-\Delta V_{||}^{(k)}(\mathbf{q}))\quad \forall \   \mathbf{q}\in\mathbb{R}_{+}^{J}\label{eq:W_perp}\end{equation}

\end{lem}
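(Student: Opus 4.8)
The plan is to reduce the claim to two elementary facts: an orthogonal (Pythagorean) decomposition of $V$, and the concavity of the square root. The entry point is the observation that, for the fixed unit direction $\mathbf{c}^{(k)}$, the vectors $\mathbf{q}_{||}^{(k)}$ and $\mathbf{q}_{\bot}^{(k)}$ are orthogonal and sum to $\mathbf{q}$. Hence
\[
V(\mathbf{q}) = \|\mathbf{q}\|^2 = \|\mathbf{q}_{||}^{(k)}\|^2 + \|\mathbf{q}_{\bot}^{(k)}\|^2 = V_{||}^{(k)}(\mathbf{q}) + \left(W_{\bot}^{(k)}(\mathbf{q})\right)^2,
\]
so that $\left(W_{\bot}^{(k)}(\mathbf{q})\right)^2 = V(\mathbf{q}) - V_{||}^{(k)}(\mathbf{q})$ for every $\mathbf{q}$. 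This identity is precisely what ties the three Lyapunov functions together, and it is the reason the claimed bound has the form it does.

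The second ingredient is that $W_{\bot}^{(k)} = \sqrt{(W_{\bot}^{(k)})^2}$, and the map $x \mapsto \sqrt{x}$ is concave on $\mathbb{R}_+$. I would invoke the tangent-line (first-order concavity) inequality $\sqrt{y} - \sqrt{x} \leq \frac{1}{2\sqrt{x}}(y-x)$, valid for any $x > 0$ and $y \geq 0$. Applying it pathwise on the event $\{\mathbf{q}(t) = \mathbf{q}\}$ with $x = \left(W_{\bot}^{(k)}(\mathbf{q})\right)^2 = \|\mathbf{q}_{\bot}^{(k)}\|^2$ (a deterministic quantity once $\mathbf{q}$ is fixed) and $y = \left(W_{\bot}^{(k)}(\mathbf{q}(t+1))\right)^2$ gives
\[
W_{\bot}^{(k)}(\mathbf{q}(t+1)) - W_{\bot}^{(k)}(\mathbf{q}) \leq \frac{1}{2\|\mathbf{q}_{\bot}^{(k)}\|}\left[\left(W_{\bot}^{(k)}(\mathbf{q}(t+1))\right)^2 - \left(W_{\bot}^{(k)}(\mathbf{q})\right)^2\right].
\]

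To finish, I would substitute the Pythagorean identity into both squared terms on the right, writing $\left(W_{\bot}^{(k)}(\mathbf{q}(t+1))\right)^2 = V(\mathbf{q}(t+1)) - V_{||}^{(k)}(\mathbf{q}(t+1))$ and likewise at time $t$, so the bracketed quantity becomes $(V(\mathbf{q}(t+1)) - V(\mathbf{q})) - (V_{||}^{(k)}(\mathbf{q}(t+1)) - V_{||}^{(k)}(\mathbf{q}))$. Multiplying both sides by the indicator $\mathcal{I}(\mathbf{q}(t) = \mathbf{q})$ and recognizing the definitions of $\Delta V$ and $\Delta V_{||}^{(k)}$ then yields \eqref{eq:W_perp} exactly. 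I expect the only real subtlety to be the well-definedness of the bound when $\|\mathbf{q}_{\bot}^{(k)}\| = 0$: the concavity step requires $x > 0$, so the inequality is understood on the set where $\mathbf{q}_{\bot}^{(k)} \neq 0$ (elsewhere the right-hand side is $+\infty$ and the bound is vacuous). This boundary case, rather than the two main inequalities, is the single point I would flag with a careful word.
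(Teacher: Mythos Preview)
Your argument is correct and is exactly the standard proof of this inequality: the Pythagorean identity $V = V_{||}^{(k)} + (W_{\bot}^{(k)})^2$ followed by the tangent-line bound for $\sqrt{\cdot}$, applied pathwise on the event $\{\mathbf{q}(t)=\mathbf{q}\}$. The paper itself does not give a proof; it simply quotes the result as Lemma~7 of \cite{erysri-heavytraffic}, whose proof is precisely the two-step argument you wrote down. Your caveat about $\|\mathbf{q}_{\bot}^{(k)}\|=0$ is also the right one to flag; the bound is only used downstream when $\|\mathbf{q}_{\bot}^{(k)}\|$ is bounded away from zero, so this causes no trouble.
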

Let us first consider the last term in this inequality.

\begin{align}
\lefteqn{\mathbb{E}\left.\left[\vartriangle V_{||}^{(k)}(\mathbf{q}^{(\epsilon)})\right|\mathbf{q}^{(\epsilon)}(t)=\mathbf{q}^{(\epsilon)}\right]}\nonumber \\
= & \mathbb{E}\left[\left.V_{||}^{(k)}(\mathbf{q}^{(\epsilon)}(t+1))-V_{||}^{(k)}(\mathbf{q}^{(\epsilon)}(t))\right|\mathbf{q}^{(\epsilon)}(t)=\mathbf{q}^{(\epsilon)}\right]\nonumber \\
= & \mathbb{E}\left[\left.\left\langle \mathbf{c}^{(k)},\mathbf{q}^{(\epsilon)}(t+1)\right\rangle ^{2}
-\left\langle \mathbf{c}^{(k)},\mathbf{q}^{(\epsilon)}(t)\right\rangle ^{2}\right|\mathbf{q}(t)=\mathbf{q}^{(\epsilon)}\right]\nonumber \\
= & \mathbb{E}\left[\left\langle \mathbf{c}^{(k)},\mathbf{q}^{(\epsilon)}(t)+\mathbf{a}^{(\epsilon)}(t)-\mathbf{s}^{(\epsilon)}(t)+\overline{\mathbf{u}}^{(\epsilon)}(t)\right\rangle ^{2}\right.\nonumber\\
&\left.\left.-\left\langle \mathbf{c}^{(k)},\mathbf{q}^{(\epsilon)}(t)\right\rangle ^{2}\right|\mathbf{q}(t)=\mathbf{q}^{(\epsilon)}\right]\nonumber \\
= & \mathbb{E}\left[\left\langle \mathbf{c}^{(k)},\mathbf{q}^{(\epsilon)}(t)+\mathbf{a}^{(\epsilon)}(t)-\mathbf{s}^{(\epsilon)}(t)\right\rangle ^{2}+\left\langle \mathbf{c}^{(k)},\overline{\mathbf{u}}^{(\epsilon)}(t)\right\rangle ^{2}\right.\nonumber \\
 & +2\left\langle \mathbf{c}^{(k)},\mathbf{q}^{(\epsilon)}(t)+\mathbf{a}^{(\epsilon)}(t)-\mathbf{s}^{(\epsilon)}(t)\right\rangle \left\langle \mathbf{c}^{(k)},\overline{\mathbf{u}}^{(\epsilon)}(t)\right\rangle\nonumber\\
  & \left. \left.-\left\langle \mathbf{c}^{(k)},\mathbf{q}^{(\epsilon)}(t)\right\rangle ^{2}\right|\mathbf{q}(t)=\mathbf{q}^{(\epsilon)}\right]\nonumber \\
\geq & \mathbb{E}\left[\left\langle \mathbf{c}^{(k)},\mathbf{a}^{(\epsilon)}(t)-\mathbf{s}^{(\epsilon)}(t)\right\rangle ^{2}-2\left\langle \mathbf{c}^{(k)},\mathbf{s}^{(\epsilon)}(t)\right\rangle \left\langle \mathbf{c}^{(k)},\overline{\mathbf{u}}^{(\epsilon)}(t)\right\rangle\right.\nonumber \\
&\left.\left.+2\left\langle \mathbf{c}^{(k)},\mathbf{q}^{(\epsilon)}(t)\right\rangle \left\langle \mathbf{c}^{(k)},\mathbf{a}^{(\epsilon)}(t)-\mathbf{s}^{(\epsilon)}(t)\right\rangle \right|\mathbf{q}(t)=\mathbf{q}^{(\epsilon)}\right] \nonumber \\
\geq & 2\left\langle \mathbf{c}^{(k)},\mathbf{q}^{(\epsilon)}\right\rangle \left(\left\langle \mathbf{c}^{(k)},\mathbb{E}\left[\left.\mathbf{a}^{(\epsilon)}(t)\right|\mathbf{q}(t)=\mathbf{q}^{(\epsilon)}\right]\right.\right.\nonumber\\
&\left.\left.-\mathbb{E}\left[\left.\mathbf{s}^{(\epsilon)}(t)\right|\mathbf{q}(t)=\mathbf{q}^{(\epsilon)}\right]\right\rangle \right)-2\left\langle \mathbf{c}^{(k)},s_{max}\mathbf{1}\right\rangle ^2 \nonumber \\
= & \frac{2||\mathbf{q}_{||}^{(\epsilon,k)}||}{\sqrt{M}}\underset{j=1}{\overset{J}{\sum}}c_{j}\left(\overset{M}{\underset{m=1}{\sum}}\mathbb{E}\left[a_{j,m}^{(\epsilon)}(t)|\mathbf{q}(t)=\mathbf{q}^{(\epsilon)}\right]\right.\nonumber\\
&\left.-\overset{M}{\underset{m=1}{\sum}}\mathbb{E}\left[s_{j}^{m(\epsilon)}(t)|\mathbf{q}(t)=\mathbf{q}^{(\epsilon)}\right]\right)-K_{2}\nonumber \\
= & \frac{2||\mathbf{q}_{||}^{(\epsilon,k)}||}{\sqrt{M}}\underset{j=1}{\overset{J}{\sum}}c_{j}\left(\lambda_{j}^{(\epsilon)}-\overset{M}{\underset{m=1}{\sum}}\mathbb{E}\left[s_{j}^{m(\epsilon)}(t)|\mathbf{q}(t)=\mathbf{q}^{(\epsilon)}\right]\right)-K_{2}\label{eq:arrival-1}\\
= & \frac{2||\mathbf{q}_{||}^{(\epsilon,k)}||}{\sqrt{M}}\underset{j=1}{\overset{J}{\sum}}c_{j}\left(\lambda_{j}^{(k)}-\epsilon^{(k)}c_{j}^{(k)}\right.\nonumber\\
&\left.-\overset{M}{\underset{m=1}{\sum}}\mathbb{E}\left[s_{j}^{m(\epsilon)}(t)|\mathbf{q}(t)=\mathbf{q}^{(\epsilon)}\right]\right)-K_{2}\label{eq:lambda_epsilon}\\
= & \frac{2||\mathbf{q}_{||}^{(\epsilon,k)}||}{\sqrt{M}}\underset{j=1}{\overset{J}{\sum}}c_{j}\left(\overset{M}{\underset{m=1}{\sum}}\lambda_{j}^{m(k)}-\overset{M}{\underset{m=1}{\sum}}\mathbb{E}\left[s_{j}^{m(\epsilon)}(t)|\mathbf{q}(t)=\mathbf{q}^{(\epsilon)}\right]\right)\nonumber\\
&-K_{2}-\frac{2\epsilon^{(k)}}{\sqrt{M}}||\mathbf{q}_{||}^{(\epsilon,k)}||\label{eq:big_lambda}\\
= & \frac{2||\mathbf{q}_{||}^{(\epsilon,k)}||}{\sqrt{M}}\overset{M}{\underset{m=1}{\sum}}\underset{j=1}{\overset{J}{\sum}}c_{j}\left(\lambda_{j}^{m(k)}-\mathbb{E}\left[s_{j}^{m(\epsilon)}(t)|\mathbf{q}(t)=\mathbf{q}^{(\epsilon)}\right]\right)\nonumber\\
&-K_{2}-\frac{2\epsilon^{(k)}}{\sqrt{M}}||\mathbf{q}_{||}^{(\epsilon,k)}||\nonumber\\
\geq & -K_{2}-\frac{2\epsilon^{(k)}}{\sqrt{M}}||\mathbf{q}_{||}^{(\epsilon,k)}||\label{eq:V_parallel}
\end{align}
where $K_{2}=2JMs_{max}^{2}$. Equation (\ref{eq:arrival-1}) follows
from the fact that the sum of arrival rates at each server is same
as the external arrival rate. Equation (\ref{eq:lambda_epsilon}) follows
from (\ref{eq:lambda}). From the definition of $\mathcal{C}$, we
have that there exists $\lambda^{m(k)}\in\mathcal{C}_{m}$ such that
$\lambda^{(k)}=\overset{M}{\underset{m=1}{\sum}}\lambda^{m(k)}$.
This gives (\ref{eq:big_lambda}). From Lemma \ref{lem:boundary},
we have that for each $m$, there exists $b_{m}^{(k)}$ such that
$\underset{j=1}{\overset{J}{\sum}}c_{j}\lambda_{j}^{m(k)}=b_{m}^{(k)}$
and $\left\langle c^{(k)},s^{m(\epsilon)}\right\rangle \leq b_{m}^{(k)}$
for every $s^{m(\epsilon)}(t)\in\mathcal{C}_{m}$. Therefore, we have, for each $m$,
\[\underset{j=1}{\overset{J}{\sum}}c_{j}\left(\lambda_{j}^{m(k)}-\mathbb{E}\left[s_{j}^{m(\epsilon)}(t)|\mathbf{q}(t)=\mathbf{q}^{(\epsilon)}\right]\right)\geq0\]
and so (\ref{eq:V_parallel}) is true.

Now, let us consider the first term in (\ref{eq:W_perp}). By expanding the drift of $V(\mathbf{q}^{(\epsilon)})$ and using (\ref{eq:unused}), it can be easily seen that
\begin{align}
\lefteqn{\mathbb{E}\left[\vartriangle V(\mathbf{q}^{(\epsilon)})|\mathbf{q}^{(\epsilon)}(t)=\mathbf{q}^{(\epsilon)}\right]}\nonumber\\
\leq &
 K'+\mathbb{E}\left[\overset{M}{\underset{m=1}{\sum}}\underset{j=1}{\overset{J}{\sum}}\left(2q^{(\epsilon)}_{j,m}\left(a_{j,m}(t)-s_{j}^{m}(t)\right)\right)\right]\label{eq:bounded_moments}\end{align}
where $K'=M\left(\underset{j}{\sum}\left(\lambda_j^2+\sigma_j^2\right)+2Js_{max}^2(1+D_{max})\right)$

By definition of $a_{j,m}(t)$, (\ref{eq:arrival}) we have \begin{align}
\lefteqn{\mathbb{E}\left[\overset{M}{\underset{m=1}{\sum}}\underset{j=1}{\overset{J}{\sum}}2q^{(\epsilon)}_{j,m}a_{j,m}(t)\right] }\nonumber\\
 & =\mathbb{E}\left[\underset{j=1}{\overset{J}{\sum}}2q^{(\epsilon)}_{j,m_{j}^{*}}a_{j}(t)\right]\nonumber \\
 & =\underset{j=1}{\overset{J}{\sum}}2q^{(\epsilon)}_{j,m_{j}^{*}}\lambda^{(\epsilon)}_{j}\nonumber\\
 &\leq \underset{j=1}{\overset{J}{\sum}}2\lambda^{(\epsilon)}_{j}\overset{M}{\underset{m=1}{\sum}}\frac{q_{j,m}^{(\epsilon)}}{M} .\label{eq:arrival_rate}\end{align}
From (\ref{eq:bounded_moments})
and (\ref{eq:arrival_rate}), we have,
\begin{align}
\lefteqn{\mathbb{E}\left[\vartriangle V(\mathbf{q}^{(\epsilon)})|\mathbf{q}^{(\epsilon)}(t)=\mathbf{q}^{(\epsilon)}\right]}\nonumber\\
\leq & K'+\underset{j=1}{\overset{J}{\sum}}2\lambda^{(\epsilon)}_{j}\overset{M}{\underset{m=1}{\sum}}\frac{q_{j,m}^{(\epsilon)}}{M} -2\overset{M}{\underset{m=1}{\sum}}\mathbb{E}\left[\underset{j=1}{\overset{J}{\sum}}q_{j,m}^{(\epsilon)}s_{j}^{m}(t)\right]\label{eq:ref_po2_maxwt} \\
=& K'+\underset{j=1}{\overset{J}{\sum}}2\left(\lambda_{j}^{(k)}-\epsilon^{(k)}c_{j}^{(k)}\right)\overset{M}{\underset{m=1}{\sum}}\frac{q_{j,m}^{(\epsilon)}}{M} \nonumber\\
& -2\overset{M}{\underset{m=1}{\sum}}\mathbb{E}\left[\underset{j=1}{\overset{J}{\sum}}q_{j,m}^{(\epsilon)}s_{j}^{m}(t)\right] \nonumber\\
= & K'-\frac{2\epsilon^{(k)}}{\sqrt{M}}||\mathbf{q}_{||}^{(\epsilon,k)}||+2\overset{M}{\underset{m=1}{\sum}}\mathbb{E}\left[\underset{j=1}{\overset{J}{\sum}}q_{j,m}^{(\epsilon)}\left(\frac{\lambda_{j}^{(k)}}{M}-s_{j}^{m}(t)\right)\right]\nonumber \\
= & K_1 \hspace{-2pt} - \hspace{-2pt} \frac{2\epsilon^{(k)}}{\sqrt{M}}||\mathbf{q}_{||}^{(\epsilon,k)}||+2\overset{M}{\underset{m=1}{\sum}}\mathbb{E}\left[\underset{r^{m}\in\mathcal{C}_{m}}{\min}\underset{j=1}{\overset{J}{\sum}}q_{j,m}^{(\epsilon)}\left(\frac{\lambda_{j}^{(k)}}{M}-r_{j}^{m}\right)\right]\label{eq:maxwt}
\end{align}
where $K_1=K'+2JMD_{max}s_{max}^2$. Equation (\ref{eq:maxwt}) is true because of MaxWeight scheduling. Note that in algorithm \ref{alg:JSQMaxwt}, the actual service allocated to jobs of type $j$ at server $m$ is same as that of the MaxWeight schedule as long as the corresponding queue length is greater than $D_{max}s_{max}$. This gives the additional $2JMD_{max}s_{max}^2$ term.

 Assuming \emph{all the servers
are identical}, we have that for each $m$, $\mathcal{C}_{m}=\{\lambda/M:\lambda\in\mathcal{C}\}.$
So, $\mathcal{C}_{m}$ is a scaled version of $\mathcal{C}$. Thus, $\lambda^{m}=\lambda/M$. Since
$k\in\mathcal{K}_{\lambda^{(\epsilon)}}^{o},$ we also have that $k\in\mathcal{K}_{\lambda^{m(\epsilon)}}^{o}$
for the capacity region $\mathcal{C}_{m}$. Thus, there exists $\delta^{(k)}>0$
so that \[
\mathcal{B}_{\delta^{(k)}}^{(k)}\triangleq\mathcal{H}^{(k)}\cap\{r\in\mathbb{R}_{+}^{J}:||r-\lambda^{(k)}/M||\leq\delta^{(k)}\}\]
lies strictly within the face of $\mathcal{C}_{m}$ that corresponds
to $\mathcal{F}^{(k)}$. (Note that this is the only instance in the proof of Proposition \ref{prop:main} that we use the assumption that all the servers are identical.) Call this face $\mathcal{F}_{m}^{(k)}$. Thus we have,
\begin{align}
\lefteqn{\mathbb{E}\left[\vartriangle V(\mathbf{q}^{(\epsilon)})|\mathbf{q}^{(\epsilon)}(t)=\mathbf{q}^{(\epsilon)}\right]
-\left(K_1-\frac{2\epsilon^{(k)}}{\sqrt{M}}||\mathbf{q}_{||}^{(\epsilon,k)}||\right)}\nonumber\\
\leq & 2\overset{M}{\underset{m=1}{\sum}}\mathbb{E}\left[\underset{r^{m}\in\mathcal{B}_{\delta^{(k)}}^{(k)}}{\min}\underset{j=1}{\overset{J}{\sum}}q_{j,m}^{(\epsilon)}\left(\frac{\lambda_{j}^{(k)}}{M}-r_{j}^{m}\right)\right]\label{eq:ball}\\
= & 2\overset{M}{\underset{m=1}{\sum}}\mathbb{E}\left[\underset{r^{m}\in\mathcal{B}_{\delta^{(k)}}^{(k)}}{\min}\underset{j=1}{\overset{J}{\sum}}\left(q_{j,m}^{(\epsilon)}-\left\Vert \mathbf{q}_{||}^{(k)}\right\Vert \frac{c_{j}}{\sqrt{M}}\right)\left(\frac{\lambda_{j}^{(k)}}{M}-r_{j}^{m}\right)\right]\label{eq:get_q_perp}\\
= & 2\overset{M}{\underset{m=1}{\sum}}\mathbb{E}\left[\underset{r^{m}\in\mathcal{B}_{\delta^{(k)}}^{(k)}}{\min}\underset{j=1}{\overset{J}{\sum}}q_{\bot j,m}^{(\epsilon,k)}\left(\frac{\lambda_{j}^{(k)}}{M}-r_{j}^{m}\right)\right]\nonumber\\
= & -2\delta^{(k)}\overset{M}{\underset{m=1}{\sum}}\sqrt{\underset{j=1}{\overset{J}{\sum}}\left(q_{\bot j,m}^{(\epsilon,k)}\right)^{2}}\label{eq:delta}\\
\leq & -2\delta^{(k)}\sqrt{\overset{M}{\underset{m=1}{\sum}}\underset{j=1}{\overset{J}{\sum}}\left(q_{\bot j,m}^{(\epsilon,k)}\right)^{2}}\label{eq:norm}\\
= & -2\delta^{(k)}\left\Vert \mathbf{q}_{\bot}^{(k)}\right\Vert .\label{eq:drift_tight}
\end{align}
Equation (\ref{eq:get_q_perp}) is true because
 $c$ is a vector perpendicular to the face $\mathcal{F}_{m}^{(k)}$ of
$\mathcal{C}_{m}$ whereas both $\lambda^{(k)}/M$ and $r^{m}$ lie
on the face $\mathcal{F}_{m}^{(k)}$. So,  $\frac{1}{\sqrt{M}}\left\Vert \mathbf{q}_{||}^{(k)}\right\Vert \underset{j=1}{\overset{J}{\sum}}c_{j}\left(\frac{\lambda_{j}^{(k)}}{M}-r_{j}^{m}\right)=0$.
Equation (\ref{eq:delta}) is true
because $\underset{j=1}{\overset{J}{\sum}}q_{\bot j,m}^{(\epsilon,k)}\left(\frac{\lambda_{j}^{(k)}}{M}-r_{j}^{m}\right)$
is inner product in $\mathbb{R}_{+}^{J}$ which is minimized when
$r^{m}$ is chosen to be on the boundary of $\mathcal{B}_{\delta^{(k)}}^{(k)}$
so that $\left(\frac{\lambda_{j}^{(k)}}{M}-r_{j}^{m}\right)_{j}$
points in the opposite direction to $\left(q_{\bot j,m}^{(\epsilon,k)}\right)_{j}$.
Since $\\\left(\overset{M}{\underset{m=1}{\sum}}\sqrt{\underset{j=1}{\overset{J}{\sum}}\left(q_{\bot j,m}^{(\epsilon,k)}\right)^{2}}\right)^{2}\geq\overset{M}{\underset{m=1}{\sum}}\underset{j=1}{\overset{J}{\sum}}\left(q_{\bot j,m}^{(\epsilon,k)}\right)^{2}$,
we get (\ref{eq:norm}).

Now substituting (\ref{eq:V_parallel}) and (\ref{eq:drift_tight})
in (\ref{eq:W_perp}), we get \begin{align*}
\lefteqn{ \mathbb{E}\left[\vartriangle W_{\bot}^{(k)}(\mathbf{q}^{(\epsilon)})|\mathbf{q}^{(\epsilon)}(t)=\mathbf{q}^{(\epsilon)}\right]}\\
\leq & \frac{K_1+K_{2}}{2\left\Vert \mathbf{q}_{\bot}^{(\epsilon,k)}\right\Vert }-\delta^{(k)} \\
\leq & \frac{-\delta^{(k)}}{2} \textrm{ whenever } \left( W_{\bot}^{(k)}(\mathbf{q}^{(\epsilon)})\geq\frac{K_1+K_2}{\delta^{(k)}}\right).
\end{align*}
Moreover, since the departures in each time slot are bounded and the arrivals are finite there is a $D<\infty$ such that $\mathbb{P}\left(|\Delta Z(X)| \leq D\right)$ almost surely. Now, applying Lemma \ref{lem:Hajek}, we have the following
proposition.
\begin{prop}\label{prop:state_space}
Assuming all the servers are identical, for $\lambda^{(\epsilon)}\in\mathcal{C}$, under JSQ routing and MaxWeight
scheduling, for every $k\in\mathcal{K}_{\lambda^{(\epsilon)}}^{o}$,
there exists a set of finite constants $\{N_{r}^{(k)}\}_{r=1,2,...}$ such
that $\mathbb{E}\left[\left\Vert \mathbf{q}_{\bot}^{(\epsilon,k)}\right\Vert ^{r}\right]\leq N_{r}^{(k)}$
for all $\epsilon>0$ and for each $r=1,2,...$.
\end{prop}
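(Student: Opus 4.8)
The plan is to apply Hajek's lemma (Lemma~\ref{lem:Hajek}) to the Lyapunov function $Z=W_{\bot}^{(k)}$, feeding it the two drift estimates already assembled above. First I would combine them. Substituting the lower bound $(\ref{eq:V_parallel})$ on $\mathbb{E}[\Delta V_{||}^{(k)}(\mathbf{q}^{(\epsilon)})\mid\mathbf{q}^{(\epsilon)}(t)=\mathbf{q}^{(\epsilon)}]$ and the upper bound $(\ref{eq:drift_tight})$ on $\mathbb{E}[\Delta V(\mathbf{q}^{(\epsilon)})\mid\mathbf{q}^{(\epsilon)}(t)=\mathbf{q}^{(\epsilon)}]$ into the almost-sure bound $(\ref{eq:W_perp})$ and taking conditional expectations, the two terms $\tfrac{2\epsilon^{(k)}}{\sqrt{M}}\|\mathbf{q}_{||}^{(\epsilon,k)}\|$ cancel, leaving
\[
\mathbb{E}\!\left[\Delta W_{\bot}^{(k)}(\mathbf{q}^{(\epsilon)})\,\Big|\,\mathbf{q}^{(\epsilon)}(t)=\mathbf{q}^{(\epsilon)}\right]\;\le\;\frac{K_{1}+K_{2}}{2\|\mathbf{q}_{\bot}^{(\epsilon,k)}\|}-\delta^{(k)} .
\]
This is exactly the computation displayed just before the proposition. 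It is worth stressing already here that this cancellation of the $\epsilon^{(k)}\|\mathbf{q}_{||}^{(\epsilon,k)}\|$ terms is what makes the argument work uniformly: had it failed, the negative-drift threshold would scale like $1/\epsilon^{(k)}$ and degenerate in the heavy-traffic limit.

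Second, I would check the two hypotheses of Lemma~\ref{lem:Hajek} with constants that do not depend on $\epsilon$. For hypothesis~(1), take $\eta\triangleq\delta^{(k)}/2>0$ and $\kappa\triangleq(K_{1}+K_{2})/\delta^{(k)}$; whenever $W_{\bot}^{(k)}(\mathbf{q}^{(\epsilon)})=\|\mathbf{q}_{\bot}^{(\epsilon,k)}\|\ge\kappa$ the displayed bound gives conditional drift at most $-\eta$. For hypothesis~(2), note that $\mathbf{q}\mapsto\|\mathbf{q}_{\bot}^{(k)}\|$ is $1$-Lipschitz, so $|\Delta W_{\bot}^{(k)}(\mathbf{q}^{(\epsilon)})|\le\|\mathbf{a}^{(\epsilon)}(t)\|+\sqrt{JM}\,s_{max}$; since per-slot arrivals and departures are bounded, this is at most some $D<\infty$ almost surely. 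The point is that $\eta$, $\kappa$ and $D$ depend only on $J$, $M$, $s_{max}$, $D_{max}$, $\delta^{(k)}$ and on the moments $\lambda^{(\epsilon)},\sigma^{(\epsilon)}$, which stay bounded as $\epsilon^{(k)}\downarrow0$, while $\delta^{(k)}$ itself was fixed independently of $\epsilon$ when the ball $\mathcal{B}_{\delta^{(k)}}^{(k)}$ was carved out inside the face $\mathcal{F}_{m}^{(k)}$.

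Finally, I would invoke Lemma~\ref{lem:Hajek}: since $\lambda^{(\epsilon)}$ lies in the interior of $\mathcal{C}$, the chain $\mathbf{Y}(t)$ is positive recurrent by throughput optimality \cite{magsriyin12}, so the lemma produces $\theta^{\star}>0$ and $C^{\star}<\infty$ — functions of $\eta,\kappa,D$ alone, hence $\epsilon$-independent — with $\mathbb{E}[e^{\theta^{\star}\|\mathbf{q}_{\bot}^{(\epsilon,k)}\|}]\le C^{\star}$ in steady state. Using $e^{\theta^{\star}x}\ge(\theta^{\star}x)^{r}/r!$ for $x\ge0$ then gives $\mathbb{E}[\|\mathbf{q}_{\bot}^{(\epsilon,k)}\|^{r}]\le r!\,C^{\star}/(\theta^{\star})^{r}\triangleq N_{r}^{(k)}$ for every $r=1,2,\dots$ and every $\epsilon>0$, which is the assertion. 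The only genuine obstacle is this uniformity in $\epsilon$; everything else is the mechanical assembly of bounds already in hand, and the uniformity is secured precisely by the cancellation noted above together with the fact that $\delta^{(k)}$ and the second-moment data do not degenerate as $\epsilon^{(k)}\downarrow0$.
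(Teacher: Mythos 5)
Your proposal is correct and follows essentially the same route as the paper: it combines the drift bounds (\ref{eq:V_parallel}) and (\ref{eq:drift_tight}) through (\ref{eq:W_perp}) to get the conditional drift bound $\frac{K_1+K_2}{2\|\mathbf{q}_{\bot}^{(\epsilon,k)}\|}-\delta^{(k)}$, verifies the bounded-drift condition, and invokes Lemma~\ref{lem:Hajek} to obtain $\epsilon$-independent moment bounds. Your explicit remarks on the cancellation of the $\epsilon^{(k)}\|\mathbf{q}_{||}^{(\epsilon,k)}\|$ terms and on why the constants do not degenerate as $\epsilon^{(k)}\downarrow 0$ simply make explicit what the paper leaves implicit.
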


As in \cite{Sto_04,erysri-heavytraffic}, note that $k\in\mathcal{K}_{\lambda^{(\epsilon)}}^{o}$ is an important assumption here. If $k\in\mathcal{K}\smallsetminus\mathcal{K}_{\lambda^{(\epsilon)}}^{o}$, i.e., if the arrival rate approaches a corner point of the capacity region as $\epsilon^{(k)}\rightarrow0$, then there is no constant $\delta^{(k)}$ so that $\mathcal{B}_{\delta^{(k)}}^{(k)}$ lies in the face $\mathcal{F}^{(k)}$. In other words, the $\delta^{(k)}$  depends on $\epsilon^{(k)}$ and so the bound obtained by Lemma \ref{lem:Hajek} also depends on $\epsilon^{(k)}$.

\emph{Remark:} As stated in Proposition \ref{prop:main}, our results hold only for the case of identical servers, which is the most practical scenario. However, we have written the proofs more generally whenever we can so that it is clear where we need the identical server assumption. In particular, in this subsection, up to Equation (\ref{eq:maxwt}), we do not need this assumption, but we have used the assumption after that, in analyzing the drift of $V(\mathbf{q})$. The upper bound in the next section is valid more generally if one can establish state-space collapse for the non-identical server case. However, at this time, this is an open problem.

\subsection{Upper Bound}\label{sub:jsqmw_ub}
In this section, we will obtain an upper bound on the weighted queue length, $\mathbb{E}\left[\left\langle \mathbf{c}^{(k)},\mathbf{q}^{(\epsilon)}\right\rangle \right]$ in steady state, and show that in the asymptotic limit as $\epsilon^{(k)}\downarrow0$, this coincides with the lower bound.

%

Noting that the drift of $\Delta W_{||}^{(k)}$ is zero in steady state, it can be shown, as in Lemma $8$ from \cite{erysri-heavytraffic} that in steady state, for  any $\mathbf{c}\in\mathbb{R}_{+}^{JM}$,
we have\begin{align}
\lefteqn{\mathbb{E}\left[\left\langle \mathbf{c},\mathbf{q}(t)\right\rangle \left\langle \mathbf{c},\overline{\mathbf{s}}(t)-\mathbf{a}(t)\right\rangle \right]}
\label{eq:UB_lhs}\\
= & \frac{\mathbb{E}\left[\left\langle \mathbf{c},\overline{\mathbf{s}}(t)-\mathbf{a}(t)\right\rangle ^{2}\right]}{2}+\frac{\mathbb{E}\left[\left\langle \mathbf{c},\overline{\mathbf{u}}(t)\right\rangle ^{2}\right]}{2}
\label{eq:UB_rhs1}\\
& +\mathbb{E}\left[\left\langle \mathbf{c},\mathbf{q}(t)+\mathbf{a}(t)-\overline{\mathbf{s}}(t)\right\rangle \left\langle \mathbf{c},\overline{\mathbf{u}}(t)\right\rangle \right]
\label{eq:UB_rhs2}
\end{align}
We will obtain an upper bound on $\mathbb{E}\left[\left\langle \mathbf{c}^{(k)},\mathbf{q}^{(\epsilon)}\right\rangle \right]$
by bounding each of the above terms. Before that, we need the following
definitions and results.

Let $\pi^{(k)}$ be the steady-state probability that the MaxWeight schedule chosen is from
the face $\mathcal{F}^{(k)}$, i.e.,\[
\pi^{(k)}=\mathbb{P}\left(\left\langle c,\overline{s}(t)\right\rangle =b^{(k)}\right).\]
where $\overline{s}_{j}=\underset{m=1}{\overset{M}{\sum}}\overline{s_{j}^{m}}$ as defined in (\ref{eq:s_j}). Also, define\[
\gamma^{(k)}=\min\left\{ b^{(k)}-\left\langle c,r\right\rangle :r\in\mathcal{S}\setminus\mathcal{F}^{(k)}\right\} .\]
Then noting that in steady state, \[
\mathbb{E}\left[\left\langle c^{(k)},\overline{s(q)}\right\rangle\right] \geq \left\langle c^{(k)},\lambda^\epsilon\right\rangle = b^{(k)}-\epsilon^{(k)}
,\] it can be shown as in Claim $1$ in \cite{erysri-heavytraffic} that for
for any $\epsilon^{(k)}\in\left(0,\gamma^{(k)}\right),$\begin{equation}
\left(1-\pi^{(k)}\right)\leq\frac{\epsilon^{(k)}}{\gamma^{(k)}}.\nonumber \end{equation}
Then, note that \begin{align}
\lefteqn{\mathbb{E}\left[\left(b^{(k)}-\left\langle c,\overline{s}(t)\right\rangle \right)^{2}\right]} \nonumber \\
 =& \left(1-\pi^{(k)}\right)\mathbb{E}\left[\left(b^{(k)}-\left\langle c,\overline{s}(t)\right\rangle \right)^{2}|\left(\left\langle c,\overline{s}(t)\right\rangle \neq b^{(k)}\right),\right]\nonumber \\
  \leq & \frac{\epsilon^{(k)}}{\gamma^{(k)}}\left(\left(b^{(k)}\right)^{2}+\left\langle c,s_{max}1\right\rangle ^{2}\right)\label{eq:b_cs2}\end{align}

Define $\widetilde{\mathcal{C}}_{m}\subseteq\mathbb{R}_{+}^{JM}$
as $\widetilde{\mathcal{C}}_{m}=\mathcal{C}_{1}\times...\times\mathcal{C}_{M}$.
Then, $\widetilde{\mathcal{C}}_{m}$ is a convex polygon.
\begin{claim}
\label{cla:C_m_tilde}Let $q^{m}\in\mathbb{R}_{+}^{J}$ for each
$m\in\{1,2,....M\}$. Denote $\mathbf{q}=\left(q^{m}\right)_{m=1}^{M}\in\mathbb{R}_{+}^{JM}$.
If, for each $m$, $\left(s^{m}\right)^{*}$ is a solution of $\underset{s\in\mathcal{C}_{m}}{\max}\left\langle q^{m},s\right\rangle $
 then $\mathbf{s}^{*}=(\left(s^{m}\right)^{*})_{m}$
is a solution of $\underset{\mathbf{s}\in\widetilde{\mathcal{C}}_{m}}{\max}\left\langle \mathbf{q},\mathbf{s}\right\rangle $.\end{claim}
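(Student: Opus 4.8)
The plan is to exploit the product structure of $\widetilde{\mathcal{C}}_m = \mathcal{C}_1\times\cdots\times\mathcal{C}_M$ together with the fact that the objective $\langle\mathbf{q},\mathbf{s}\rangle$ decomposes additively over the blocks indexed by $m$. Writing $\mathbf{s} = (s^1,\dots,s^M)$ with $s^m\in\mathcal{C}_m$ and $\mathbf{q}=(q^1,\dots,q^M)$, the dot product in $\mathbb{R}^{JM}$ splits as $\langle\mathbf{q},\mathbf{s}\rangle = \sum_{m=1}^M \langle q^m, s^m\rangle$, where each inner product on the right is taken in $\mathbb{R}^J$. The key observation is that there is no coupling constraint linking the choices of $s^m$ across different $m$: the constraint set is literally a Cartesian product.

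First I would note that for any feasible $\mathbf{s}=(s^1,\dots,s^M)\in\widetilde{\mathcal{C}}_m$, the bound
\begin{align*}
\langle\mathbf{q},\mathbf{s}\rangle = \sum_{m=1}^M \langle q^m,s^m\rangle \le \sum_{m=1}^M \max_{s\in\mathcal{C}_m}\langle q^m,s\rangle = \sum_{m=1}^M \langle q^m,(s^m)^*\rangle
\end{align*}
holds termwise, since each $s^m$ ranges over $\mathcal{C}_m$ independently and $(s^m)^*$ is by hypothesis a maximizer of $\langle q^m,\cdot\rangle$ over $\mathcal{C}_m$. The right-hand side is exactly $\langle\mathbf{q},\mathbf{s}^*\rangle$. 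Second, $\mathbf{s}^* = ((s^m)^*)_m$ is itself feasible, i.e. $\mathbf{s}^*\in\widetilde{\mathcal{C}}_m$, because each component $(s^m)^*$ lies in $\mathcal{C}_m$ by construction. Combining feasibility with the upper bound shows $\mathbf{s}^*$ attains the maximum of $\langle\mathbf{q},\mathbf{s}\rangle$ over $\widetilde{\mathcal{C}}_m$, which is the claim.

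There is essentially no obstacle here: the statement is a one-line consequence of the separability of a linear objective over a product domain, and the only thing to be careful about is bookkeeping between the $\mathbb{R}^J$-level inner products and the $\mathbb{R}^{JM}$-level inner product, which the paper's notational convention ($\langle\cdot,\cdot\rangle$ versus $\langle\mathbf{\cdot},\mathbf{\cdot}\rangle$) already makes explicit. The reason the claim is worth stating is its use downstream: it lets one replace the per-server MaxWeight scheduling decisions by a single joint maximization over $\widetilde{\mathcal{C}}_m$, which is the form needed to bound the terms in \eqref{eq:UB_rhs1}--\eqref{eq:UB_rhs2} of the upper-bound argument.
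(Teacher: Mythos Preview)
Your proof is correct and essentially the same as the paper's: both rest on the separability $\langle\mathbf{q},\mathbf{s}\rangle=\sum_{m}\langle q^m,s^m\rangle$ over the product domain $\widetilde{\mathcal{C}}_m=\mathcal{C}_1\times\cdots\times\mathcal{C}_M$. The only cosmetic difference is that the paper phrases the final step as a contradiction (if $\langle\mathbf{q},\mathbf{s}^*\rangle$ were strictly below the maximum, some block would violate optimality of $(s^m)^*$), whereas you give the direct termwise upper bound; these are logically equivalent one-liners.
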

\begin{proof}
\sloppy
Since $\mathbf{s}^{*}\in\widetilde{\mathcal{C}}_{m}$, $\left\langle \mathbf{q},\mathbf{s}^{*}\right\rangle \leq\underset{\mathbf{s}\in\widetilde{\mathcal{C}}_{m}}{\max}\left\langle \mathbf{q},\mathbf{s}\right\rangle $.
Note that $\underset{\mathbf{s}\in\widetilde{\mathcal{C}}_{m}}{\max}\left\langle \mathbf{q},\mathbf{s}\right\rangle =\overset{M}{\underset{m=1}{\sum}}\underset{s^{m}\in\mathcal{C}_{m}}{\max}\left\langle q^{m},s^{m}\right\rangle $
. Therefore, if $\left\langle \mathbf{q},\mathbf{s}^{*}\right\rangle <\underset{\mathbf{s}\in\widetilde{\mathcal{C}}_{m}}{\max}\left\langle \mathbf{q},\mathbf{s}\right\rangle $,
we have $\overset{M}{\underset{m=1}{\sum}}\left\langle q^{m},\left(s^{m}\right)^{*}\right\rangle <\overset{M}{\underset{m=1}{\sum}}\underset{s^{m}\in\mathcal{C}_{m}}{\max}\left\langle q^{m},s^{m}\right\rangle $.
Then there exists an $m\leq M$ such that $\left\langle q^{m},\left(s^{m}\right)^{*}\right\rangle <  \underset{s^{m}\in\mathcal{C}_{m}}{\max}\left\langle q^{m},s^{m}\right\rangle $,
which is a contradiction.
\end{proof}
Therefore, choosing a MaxWeight schedule at each server is same as
choosing a MaxWeight schedule from the convex polygon, $\widetilde{\mathcal{C}}_{m}$.
Since there are a finite number of feasible schedules, given $\mathbf{c}^{(k)}\in\mathbb{R}_{+}^{JM}$
such that $||\mathbf{c}^{(k)}||=1$, there exists an angle $\theta^{(k)}\in(0,\frac{\pi}{2}]$
such that,  for all $\mathbf{q}\in\left\{ \mathbf{q}\in\mathbb{R}_{+}^{JM}:||\mathbf{q}_{||}^{(k)}||\geq||\mathbf{q}||\cos\left(\theta^{(k)}\right)\right\}$, (i.e., for all $\mathbf{q}\in\mathbb{R}_{+}^{JM}$ such that $\theta_{\mathbf{q}\mathbf{q}_{||}^{(k)}}\leq \theta^{(k)}$ where  $\theta_{\mathbf{a}\mathbf{b}}$ represents the angle between vectors $\mathbf{a}$ and $\mathbf{b}$), we have
 \[
\left\langle \mathbf{c}^{(k)},\overline{\mathbf{s}}(t)\right\rangle \mathcal{I}\left(\mathbf{q}(t)=\mathbf{q}\right)=b^{(k)}/\sqrt{M}.\]

We can bound the unused service as follows.
\begin{align}
\mathbb{E}\left[\left\langle \mathbf{c}^{(k)},\overline{\mathbf{u}}(t)\right\rangle \right]
 & \leq\mathbb{E}\left[\left\langle \mathbf{c}^{(k)},\overline{\mathbf{s}}(t)-\mathbf{a}(t)\right\rangle \right]\nonumber\\
 & =\frac{1}{\sqrt{M}}\left(\mathbb{E}\left[\left\langle c^{(k)},\overline{s}(t)\right\rangle \right]-\left\langle c^{(k)},\lambda^\epsilon\right\rangle \right)\nonumber\\
 & =\frac{1}{\sqrt{M}}\left(\mathbb{E}\left[\left\langle c^{(k)},\overline{s}(t)\right\rangle \right]-\left(b^{(k)}-\epsilon^{(k)}\right)\right)\nonumber\\
 & \leq\frac{\epsilon^{(k)}}{\sqrt{M}}\label{eq:c_u}\end{align}
\sloppy where the last inequality follows from the fact that the MaxWeight schedule lies inside the capacity region and so $\mathbb{E}\left[\left\langle c^{(k)},\overline{s}(t)\right\rangle \right]\leq b^{(k)}.$

\fussy Now, we will bound each of the terms in (\ref{eq:UB_rhs2}). Let us first consider the term in (\ref{eq:UB_lhs}). Given that the arrival rate if $\lambda^\epsilon$ we have,
\begin{align*}
\lefteqn{\mathbb{E}\left[\left\langle \mathbf{c}^{(k)},\mathbf{q}(t)\right\rangle \left\langle \mathbf{c}^{(k)},\overline{\mathbf{s}}(t)-\mathbf{a}(t)\right\rangle \right]}\\
= & \mathbb{E}\left[\left\langle \mathbf{c}^{(k)},\mathbf{q}(t)\right\rangle \right]\left(\frac{b^{(k)}}{\sqrt{M}}-\frac{1}{\sqrt{M}}\left\langle c^{(k)},\lambda\right\rangle \right)\\
&-\mathbb{E}\left[\left\langle \mathbf{c}^{(k)},\mathbf{q}(t)\right\rangle \left(\frac{b^{(k)}}{\sqrt{M}}-\left\langle \mathbf{c}^{(k)},\overline{\mathbf{s}}(t)\right\rangle \right)\right]\\
= & \frac{\epsilon^{(k)}}{\sqrt{M}}\mathbb{E}\left[\left\langle \mathbf{c}^{(k)},\mathbf{q}(t)\right\rangle \right]\\
&-\mathbb{E}\left[||\mathbf{q}_{||}^{(k)}(t)||\left(\frac{b^{(k)}}{\sqrt{M}}-\left\langle \mathbf{c}^{(k)},\overline{\mathbf{s}}(t)\right\rangle \right)\right].\end{align*}
Now, we will bound the last term in this equation using the definition of $\theta^{(k)}$ as follows.\begin{align}
\lefteqn{\mathbb{E}\left[||\mathbf{q}_{||}^{(k)}(t)||\left(\frac{b^{(k)}}{\sqrt{M}}-\left\langle \mathbf{c}^{(k)},\overline{\mathbf{s}}(t)\right\rangle \right)\right]}\nonumber\\
= & \mathbb{E}\left[||\mathbf{q}(t)||\cos\left(\theta_{\mathbf{q}\mathbf{q}_{||}^{(k)}}\right)\left(\frac{b^{(k)}}{\sqrt{M}}-\left\langle \mathbf{c}^{(k)},\overline{\mathbf{s}}(t)\right\rangle \right)\right]\nonumber\\
= & \mathbb{E}\left[||\mathbf{q}(t)||\cos\left(\theta_{\mathbf{q}\mathbf{q}_{||}^{(k)}}\right)\mathcal{I}\left(\theta_{\mathbf{q}\mathbf{q}_{||}^{(k)}}>\theta^{(k)}\right)\right.\nonumber\\
&\left.\times\left(\frac{b^{(k)}}{\sqrt{M}}-\left\langle \mathbf{c}^{(k)},\overline{\mathbf{s}}(t)\right\rangle \right)\right]
\label{eq:theta}\\
= & \mathbb{E}\left[||\mathbf{q}_{\bot}^{(k)}(t)||\cot\left(\theta_{\mathbf{q}\mathbf{q}_{||}^{(k)}}\right)\mathcal{I}\left(\theta_{\mathbf{q}\mathbf{q}_{||}^{(k)}}>\theta^{(k)}\right)\right.\nonumber\\
&\left.\times\left(\frac{b^{(k)}}{\sqrt{M}}-\left\langle \mathbf{c}^{(k)},\overline{\mathbf{s}}(t)\right\rangle \right)\right]\nonumber\\
= & \mathbb{E}\left[||\mathbf{q}_{\bot}^{(k)}(t)||\mathcal{I}\left(\theta_{\mathbf{q}\mathbf{q}_{||}^{(k)}}>\theta^{(k)}\right)\left(\frac{b^{(k)}}{\sqrt{M}}-\left\langle \mathbf{c}^{(k)},\overline{\mathbf{s}}(t)\right\rangle \right)\right]\nonumber\\
&\times\cot\left(\theta^{(k)}\right)\nonumber\\
\leq & \frac{1}{\sqrt{M}}\mathbb{E}\left[||\mathbf{q}_{\bot}^{(k)}(t)||\left(b^{(k)}-\left\langle c^{(k)},s(t)\right\rangle \right)\right]\cot\left(\theta^{(k)}\right)\label{eq:cs_def}\\
\leq & \frac{\cot\left(\theta^{(k)}\right)}{\sqrt{M}}\sqrt{\mathbb{E}\left[||\mathbf{q}_{\bot}^{(k)}(t)||^{2}\right]\mathbb{E}\left[\left(b^{(k)}-\left\langle c^{(k)},s(t)\right\rangle \right)^{2}\right]}\label{eq:cauchy_scwartz}\\
\leq & \frac{\cot\left(\theta^{(k)}\right)}{\sqrt{M}}\sqrt{N_{2}^{(k)}\frac{\epsilon^{(k)}}{\gamma^{(k)}}\left(\left(b^{(k)}\right)^{2}+\left\langle c,s_{max}1\right\rangle ^{2}\right)}\nonumber\end{align}
where (\ref{eq:theta}) follows from the definition of $\theta^{(k)}$, (\ref{eq:cs_def}) follows from our choice of $\mathbf{c}^{(k)}$ and definition of $s$, (\ref{eq:cauchy_scwartz}) follows from Cauchy-Schwarz inequality. the last inequality follows from state-space collapse (Proposition \ref{prop:state_space}) and (\ref{eq:b_cs2}). Thus, we have \begin{align}
\lefteqn{\mathbb{E}\left[\left\langle \mathbf{c}^{(k)},\mathbf{q}(t)\right\rangle \left\langle \mathbf{c}^{(k)},\overline{\mathbf{s}}(t)-\mathbf{a}(t)\right\rangle \right]} \nonumber \\
\geq & \frac{\epsilon^{(k)}}{\sqrt{M}}\mathbb{E}\left[\left\langle \mathbf{c}^{(k)},\mathbf{q}(t)\right\rangle \right]\nonumber \\
 & -\frac{\cot\left(\theta^{(k)}\right)}{\sqrt{M}}\sqrt{N_{2}^{(k)}\frac{\epsilon^{(k)}}{\gamma^{(k)}}\left(\left(b^{(k)}\right)^{2}+\left\langle c,s_{max}1\right\rangle ^{2}\right)}\label{eq:LHS}\end{align}

Now, consider the first term in (\ref{eq:UB_rhs1}). Again, using the fact that the arrival rate is $\lambda^\epsilon$ we have,
\begin{align}
\lefteqn{\mathbb{E}\left[\left\langle \mathbf{c}^{(k)},\overline{\mathbf{s}}(t)-\mathbf{a}(t)\right\rangle ^{2}\right]}\nonumber \\
= & \mathbb{E}\left[\hspace{-2pt} \left( \hspace{-2pt} \left\langle \mathbf{c}^{(k)},\mathbf{a}(t)\right\rangle \hspace{-2pt} - \hspace{-2pt} \frac{b^{(k)}}{\sqrt{M}}\right)^{2}\right]+\mathbb{E}\left[\hspace{-2pt} \left(\frac{b^{(k)}}{\sqrt{M}}-\left\langle \mathbf{c}^{(k)},\overline{\mathbf{s}}(t)\right\rangle \right)^{2}\right]\nonumber \\
 & -2\frac{\epsilon^{(k)}}{\sqrt{M}}\mathbb{E}\left[\left(\frac{b^{(k)}}{\sqrt{M}}-\left\langle \mathbf{c}^{(k)},\overline{\mathbf{s}}(t)\right\rangle \right)\right]\nonumber \\
\leq & \mathbb{E}\left[\left(\frac{1}{\sqrt{M}}\left\langle c^{(k)},a(t)-\lambda^\epsilon\right\rangle +\frac{\left\langle c^{(k)},\lambda^\epsilon\right\rangle -b^{(k)}}{\sqrt{M}}\right)^{2}\right]\nonumber\\
&+\mathbb{E}\left[\left(\frac{b^{(k)}}{\sqrt{M}}-\left\langle \mathbf{c}^{(k)},\overline{\mathbf{s}}(t)\right\rangle \right)^{2}\right]\nonumber \\
= & \frac{1}{M}\mathbb{E}\left[\left(\left\langle c^{(k)},a(t)-\lambda^\epsilon\right\rangle \right)^{2}\right]+2\frac{\epsilon^{(k)}}{\sqrt{M}}\mathbb{E}\left[\left\langle c^{(k)},a(t)-\lambda^\epsilon\right\rangle \right]\nonumber \\
 &+\frac{\left(\epsilon^{(k)}\right)^{2}}{M} +\frac{1}{M}\mathbb{E}\left[\left(b^{(k)}-\left\langle c^{(k)},s(t)\right\rangle \right)^{2}\right] \nonumber \\
\leq & \frac{1}{M}\left\langle \left(c^{(k)}\right)^{2},\sigma^{2}\right\rangle +\frac{\left(\epsilon^{(k)}\right)^{2}}{M}\nonumber\\
&+\frac{1}{M}\frac{\epsilon^{(k)}}{\gamma^{(k)}}\left(\left(b^{(k)}\right)^{2}+\left\langle c,s_{max}1\right\rangle ^{2}\right)\label{eq:variance} \\
= & \frac{1}{\sqrt{M}}\left(\zeta^{(\epsilon,k)}+\frac{1}{\sqrt{M}}\frac{\epsilon^{(k)}}{\gamma^{(k)}}\left(\left(b^{(k)}\right)^{2}+\left\langle c,s_{max}1\right\rangle ^{2}\right)\right)\label{eq:first_term}\end{align}
\sloppy where $\zeta^{(\epsilon,k)}$ was defined as $\zeta^{(\epsilon,k)}=\frac{\left(\epsilon^{(k)}\right)^{2}}{\sqrt{M}}+\frac{1}{\sqrt{M}}\left\langle \left(c^{(k)}\right)^{2},\left(\sigma^{(\epsilon)}\right)^{2}\right\rangle $. Equation (\ref{eq:variance}) is obtained by noting that $\mathbb{E}\left[a(t) \right]=\lambda^\epsilon$ and so $\mathbb{E}\left[\left(\left\langle c^{(k)},a(t)-\lambda^\epsilon\right\rangle \right)^{2}\right] = var\left(\left\langle c^{(k)},a(t)-\lambda^\epsilon\right\rangle \right)=\left\langle c^{(k)},var(a(t)-\lambda^\epsilon)\right\rangle$.

\fussy Consider the second term in (\ref{eq:UB_rhs1}).
\begin{align}
\mathbb{E}\left[\left\langle \mathbf{c}^{(k)},\overline{\mathbf{u}}(t)\right\rangle ^{2}\right]
 & \leq\left\langle \mathbf{c}^{(k)},\mathbf{1}s_{max}\right\rangle \mathbb{E}\left[\left\langle \mathbf{c}^{(k)},\overline{\mathbf{u}}(t)\right\rangle \right]\nonumber \\
 & \leq\frac{\epsilon^{(k)}}{\sqrt{M}}\left\langle \mathbf{c}^{(k)},\mathbf{1}s_{max}\right\rangle \label{eq:second_term}\end{align}
where the last inequality follows from (\ref{eq:c_u}).

Now, we consider the term in (\ref{eq:UB_rhs2}). We need some definitions
so that we can only consider the non-zero components of $c$. Let
$\mathcal{L}_{++}^{(k)}=\left\{ j\in\left\{ 1,2,...J\right\} :c_{j}^{(k)}>0\right\} $.
Define $\widetilde{\mathbf{c}}^{(k)}=\left(c_{jm}^{(k)}\right)_{j\in\mathcal{L}_{++}^{(k)}}$
,$\widetilde{\mathbf{q}}=\left(q_{jm}\right)_{j\in\mathcal{L}_{++}^{(k)}}$
and $\widetilde{\mathbf{u}}=\left(\overline{u}_{jm}\right)_{j\in\mathcal{L}_{++}^{(k)}}$.
Also define, the projections, $\widetilde{\mathbf{q}}^{(k)}_{||}=\left\langle \widetilde{\mathbf{c}}^{(k)},\widetilde{\mathbf{q}}\right\rangle \widetilde{\mathbf{c}}^{(k)}$
and $\widetilde{\mathbf{q}}^{(k)}_{\bot}=\widetilde{\mathbf{q}}-\widetilde{\mathbf{q}}^{(k)}_{||}$.
Similarly, define $\widetilde{\mathbf{u}}^{(k)}_{||}$ and $\widetilde{\mathbf{u}}^{(k)}_{\bot}$.
Then, we have
 \begin{align}
\lefteqn{\mathbb{E}\left[\left\langle \mathbf{c}^{(k)},\mathbf{q}(t)+\mathbf{a}(t)-\overline{\mathbf{s}}(t)\right\rangle \left\langle \mathbf{c}^{(k)},\overline{\mathbf{u}}(t)\right\rangle \right]}\nonumber \\
= & \mathbb{E}\left[\left\langle \mathbf{c}^{(k)},\mathbf{q}(t+1)\right\rangle \left\langle \mathbf{c}^{(k)},\overline{\mathbf{u}}(t)\right\rangle \right]-\mathbb{E}\left[\left\langle \mathbf{c}^{(k)},\overline{\mathbf{u}}(t)\right\rangle ^{2}\right]\nonumber \\
\leq & \mathbb{E}\left[\left\langle \mathbf{c}^{(k)},\mathbf{q}(t+1)\right\rangle \left\langle \mathbf{c}^{(k)},\overline{\mathbf{u}}(t)\right\rangle \right]\nonumber \\
= & \mathbb{E}\left[\left\langle \widetilde{\mathbf{c}}^{(k)},\widetilde{\mathbf{q}}(t+1)\right\rangle \left\langle \widetilde{\mathbf{c}}^{(k)},\widetilde{\mathbf{u}}(t)\right\rangle \right]\nonumber \\
= & \mathbb{E}\left[ ||\widetilde{\mathbf{q}}^{(k)}_{||}(t+1)||||\widetilde{\mathbf{u}}^{(k)}_{||}|| \right]\nonumber \\
= & \mathbb{E}\left[ \left\langle \widetilde{\mathbf{q}}^{(k)}_{||}(t+1),\widetilde{\mathbf{u}^{(k)}_{||}}(t)\right\rangle \right]\nonumber \\
= & \mathbb{E}\left[ \left\langle \widetilde{\mathbf{q}}^{(k)}_{||}(t+1),\widetilde{\mathbf{u}}(t)\right\rangle \right]\nonumber \\
= & \mathbb{E}\left[ \left\langle \widetilde{\mathbf{q}}(t+1),\widetilde{\mathbf{u}}(t)\right\rangle \right] +
\mathbb{E}\left[ \left\langle -\widetilde{\mathbf{q}}^{(k)}_{\bot}(t+1),\widetilde{\mathbf{u}}(t)\right\rangle \right]
\nonumber \\
\leq & \mathbb{E}\left[ \left\langle D_{max}s_{max}\mathbf{1},\widetilde{\mathbf{u}}(t)\right\rangle \right] + \sqrt{\mathbb{E}\left[||\widetilde{\mathbf{q}}^{(k)}_{\bot}(t+1)||^{2}\right]\mathbb{E}\left[||\widetilde{\mathbf{u}}(t)||^{2}\right]}
\label{eq:last_term_1}\\
\leq & D_{max}s_{max}\mathbb{E}\left[ \left\langle \mathbf{1},\widetilde{\mathbf{u}}(t)\right\rangle \right] +
\sqrt{N_{2}^{(k)}\mathbb{E}\left[\left\langle\widetilde{\mathbf{u}}(t),\widetilde{\mathbf{u}}(t)\right\rangle\right]}
\label{eq:last_term_2}\\
\leq & D_{max}s_{max}\mathbb{E}\left[ \left\langle \mathbf{1},\widetilde{\mathbf{u}}(t)\right\rangle \right] +
\sqrt{N_{2}^{(k)}s_{max}\mathbb{E}\left[\left\langle\mathbf{1},\widetilde{\mathbf{u}}(t)\right\rangle\right]}
\nonumber
\end{align}
where (\ref{eq:last_term_1}) follows from (\ref{eq:unused}) and from Cauchy-Schwarz inequality. Equation (\ref{eq:last_term_2}) follows from from state-space collapse (Proposition \ref{prop:state_space}), since  $\mathbb{E}\left[||\widetilde{\mathbf{q}}^{(k)}_{\bot}||^{2}\right]\leq\mathbb{E}\left[||\overline{\mathbf{q}}^{(k)}_{\bot}||^{2}\right]\leq N_{2}^{(k)}$.

Note that
\begin{align}
\mathbb{E}\left[\left\langle \mathbf{1},\widetilde{\mathbf{u}}(t)\right\rangle \right]
 & \leq\frac{1}{c_{min}^{(k)}}\mathbb{E}\left[\left\langle \widetilde{\mathbf{c}}^{(k)},\widetilde{\mathbf{u}}(t)\right\rangle \right]\nonumber\\
 & =\frac{1}{c_{min}^{(k)}}\mathbb{E}\left[\left\langle \mathbf{c}^{(k)},\overline{\mathbf{u}}(t)\right\rangle \right]\nonumber\\
 & \leq\frac{\epsilon^{(k)}}{\sqrt{M}}\nonumber
\end{align}
where $c_{min}^{(k)}\overset{\Delta}{=}\underset{j\in\mathcal{L}_{++}^{(k)}}{\min}c_{j}^{(k)}>0$ and the last inequality follows from (\ref{eq:c_u}). Thus, we have \begin{align}
\lefteqn{\mathbb{E}\left[\left\langle \mathbf{c}^{(k)},\mathbf{q}(t)+\overline{\mathbf{s}}(t)-\mathbf{a}(t)\right\rangle \left\langle \mathbf{c}^{(k)},\overline{\mathbf{u}}(t)\right\rangle \right]}\nonumber\\
\leq&D_{max}s_{max}\frac{\epsilon^{(k)}}{\sqrt{M}} +
\sqrt{N_{2}^{(k)}s_{max}\frac{\epsilon^{(k)}}{\sqrt{M}}}
\label{eq:last_term}
\end{align}
Now, substituting (\ref{eq:LHS}), (\ref{eq:first_term}), (\ref{eq:second_term})
and (\ref{eq:last_term}) in (\ref{eq:UB_rhs2}), we get\begin{align*}
\epsilon^{(k)}\mathbb{E}\left[\left\langle \mathbf{c}^{(k)},\mathbf{q}(t)\right\rangle \right] & \leq  \frac{\zeta^{(\epsilon,k)}}{2}+B_{2}^{(\epsilon,k)}\end{align*}
where \begin{align*}
B_{2}^{(\epsilon,k)}= & \frac{1}{2\sqrt{M}}\frac{\epsilon^{(k)}}{\gamma^{(k)}}\left(\hspace{-2pt}\left(b^{(k)}\right)^{2}+\left\langle c,s_{max}1\right\rangle ^{2}\hspace{-2pt}\right) +D_{max}s_{max}\epsilon^{(k)} \\
 & +\frac{\epsilon^{(k)}}{2}\left\langle \mathbf{c}^{(k)},\mathbf{1}s_{max}\right\rangle +
\sqrt{\sqrt{M}N_{2}^{(k)}s_{max}\epsilon^{(k)}}\\
 & +\cot\left(\theta^{(k)}\right)\sqrt{N_{2}^{(k)}\frac{\epsilon^{(k)}}{\gamma^{(k)}}\left(\left(b^{(k)}\right)^{2}+\left\langle c,s_{max}1\right\rangle ^{2}\right)}.\end{align*}
Thus, in the heavy traffic limit as $\epsilon^{(k)}\downarrow0$,
we have that \begin{equation}
\underset{\epsilon^{(k)}\downarrow0}{\lim}\epsilon^{(k)}\mathbb{E}\left[\left\langle \mathbf{c}^{(k)},\overline{\mathbf{q}}^{(\epsilon)}\right\rangle \right]\leq\frac{\zeta^{(k)}}{2}\label{eq:upper_bound}\end{equation}
where $\zeta^{(k)}$ was defined as $\zeta^{(k)}=\frac{1}{\sqrt{M}}\left\langle \left(c^{(k)}\right)^{2},\left(\sigma\right)^{2}\right\rangle $.
Thus, (\ref{eq:lower_bound}) and (\ref{eq:upper_bound}) establish
the first moment heavy-traffic optimality of JSQ routing and MaxWeight
scheduling policy.  The proof of Proposition \ref{prop:main} is now complete.

\subsection{Power-of-Two-Choices Routing and MaxWeight Scheduling}

JSQ routing needs complete
queue length information at the router. In practice, this communication
overhead can be considerable when the number of servers is large. An
alternate algorithm is the power-of-two-choices routing algorithm.

In this algorithm, in each time
slot $t$, for each type of job $m$, two servers $m_{1}^{j}(t)$
and $m_{2}^{j}(t)$ are chosen uniformly at random. All the type $m$
job arrivals in this time slot are then routed to the server with
the shorter queue length among these two, i.e., $m_{j}^{*}(t)=\underset{m\in\{m_{1}^{j}(t),m_{2}^{j}(t)\}}{\arg\min}q_{j,m}(t)$.

It was shown in \cite{magsriyin12} that power-of-two-choices routing algorithm with MaxWeight scheduling  is throughput optimal if all the servers are identical. From the proof of throughput optimality, one obtains
\begin{align}
\lefteqn{\mathbb{E}\left[\vartriangle V(\mathbf{q}^{(\epsilon)})|\mathbf{q}^{(\epsilon)}(t)=\mathbf{q}^{(\epsilon)}\right]}\nonumber\\
\leq & K'+\underset{j=1}{\overset{J}{\sum}}2\lambda_{j}\overset{M}{\underset{m=1}{\sum}}\frac{q_{j,m}^{(\epsilon)}}{M}-\mathbb{E}\left[\overset{M}{\underset{m=1}{\sum}}\underset{j=1}{\overset{J}{\sum}}2q_{j,m}^{(\epsilon)}s_{j}^{m}(t)\right] \nonumber \end{align}
Note that this inequality is identical to (\ref{eq:ref_po2_maxwt}), in the proof of state-space collapse of JSQ routing and MaxWeight scheduling policy. Also note that the remainder of the proof of state-space collapse and upper bound in Sections \ref{sub:jsqmw_state_space} and \ref{sub:jsqmw_ub} is independent of the routing policy. Moreover, the proof of lower bound in Section \ref{sub:lower_bound} is also valid here. Thus, once we have the above relation, the proof of heavy traffic optimality of this policy is identical to that of JSQ routing and MaxWeight scheduling policy.

\section{Power-of-Two-Choices routing}\label{sec:powerof2}

In this section, we consider the power-of-two-choices routing algorithm, without any scheduling. This is a special case of the model considered in the previous section when all the jobs are of the same type. In this case, there is a single queue at each server and no scheduling is needed.
\subsection*{Note on Notation}
 In this section, since $J=1$ here, we just denote all vectors ( in $\mathbb{R}^{M}$) in bold font $\mathbf{x}$.

The result from previous section is not applicable here because of the following reason. In Proposition \ref{prop:main}, a sequence of systems with arrival rate approaching a face of the capacity region, along its normal vector were considered. The normal vector of the face plays an important role in the state space collapse, and so the upper bound obtained is in terms of this normal. So, this result cannot be applied if the arrival rates were approaching a corner point where there is no common normal vector.
In particular, the proof of state space collapse in Section \ref{sub:jsqmw_state_space} is not applicable here because one cannot define a ball $\mathcal{B}_{\delta^{(k)}}^{(k)}$ as in  (\ref{eq:ball}) at a corner point.

Let ${\cal A}(t)$ denote the set of jobs that arrive at the beginning of time slot $t.$ Let $D_k$ be the size of $k^{th}$ job. We define $a(t)=\sum_{k \in{\cal A}(t)} D_k,$ to be the overall size of the jobs in ${\cal A}(t)$ or the total time slots requested by the jobs. We assume that $a(t)$ is a stochastic process which is i.i.d. across time slots, $\mathbb{E}[a(t)]=\lambda$ and $\Pr(a(t)=0)>\epsilon_a$ for some $\epsilon_a>0$ for all $t$. Let $\sigma^{2}=var[a(t)]$.
Let $X(t)$ denote the servers chosen
at time slot $t$. So, $X(t)$ can take one of $^{M}C_{2}$ values
of the form $(m,m')$ where $m,m'\in\mathbb{Z}_{+}$ and $1\leq m<m'\leq M$.
Here $^{M}C_{2}$ denotes the number of $2$-combinations in a set of size $M$.
Note that $X(t)$ is an i.i.d. random process with a uniform distribution
over all possible values. Define $^{M}C_{2}$ different arrival processes
denoted by $a_{m,m'}(t)$ with $1\leq m<m'\leq M$ as follows. If $x(t)=(\hat{m},\hat{m'})$,
then \[
a_{m,m'}(t)=\begin{cases}
a(t) & \textrm{ for }m=\hat{m}\textrm{ and }m'=\hat{m'}\\
0 & \textrm{otherwise}\end{cases}.\]
Thus, $\{a_{m,m'}(t)\}$ can be thought of as a set of correlated arrival
processes. They are correlated so that only one of them can have a
non-zero value at each time. Let $\lambda_{m,m'}=\mathbb{E}[a_{m,m'}(t)]$.
Then $\lambda_{m,m'}=\frac{\lambda}{^{M}C_{2}}$. The arrivals in $a_{m,m'}(t)$
can be routed only to either server $m$ or server $m'$. According
to the power-of-two-choices algorithm, all the jobs are then routed
to the server with smallest queue among $m$ and $m'$. Ties are broken
at random. Let $a_{m}(t)$ denote the arrivals to server $m$ at time
$t$ after routing.

Let $\mu$ be the amount of service available in each time slot at each server. Not all of this service may be used either because the queue is empty or because different chunks of same job cannot be served simultaneously. Let $s_{m}(t)$ be the actual amount of service scheduled available in time slot $t$ at server $m$. Let $u_{m}(t)$ denote the unused service which is defined as
$u_{m}(t)=\mu-s_{m}(t)$. Let $q_{m}(t)$ denote the queue length
at server $m$ at time $t$, and let $\mathbf{q}(t)$ denote the vector
$(q_{1}(t),q_{2}(t),....q_{M}(t))$ Then, we have \[
q_{m}(t+1)=q_{m}(t)+a_{m}(t)-\mu+u_{m}(t).\]
Note that
\begin{equation}
    u_{m}(t)=0 \textrm{ whenever } q_{m}(t)+a_m(t) \geq D_{max}\mu.\label{eq:unused_po2}
\end{equation}


We again follow the procedure used in the previous section to show heavy traffic optimality. Since power-of-two-choices algorithm tries to equalize any two randomly chosen queues, we expect that there is a state-space collapse along the direction where all queues are equal, similar to JSQ algorithm.

Let $\mathbf{c}_{1}=\frac{1}{\sqrt{M}}(1,1,.....1)$
be the unit vector in $\mathbb{R}^{M}$ along which we expect state-space collapse. Let $\mathbf{1}$ denote
the vector (1,1,.....1). For any $\mathbf{Q}\in\mathbb{R}^{M}$,
define $\mathbf{Q}_{||}$ to be the component of $\mathbf{Q}$ along
$\mathbf{c}_{1}$, i.e., $\mathbf{Q}_{||}=\left\langle \mathbf{Q},\mathbf{c}_{1}\right\rangle \mathbf{c}_{1}$
where $\left\langle .,.\right\rangle $ denotes the canonical dot
product. Thus, $\mathbf{Q}_{||}=\frac{\underset{m}{\sum}Q_{m}}{M}\mathbf{1}$. Define $\mathbf{Q}_{\bot}$ to be the component of $\mathbf{Q}$ perpendicular to $\mathbf{Q}_{||}$, i.e., $\mathbf{Q}_{\bot}=\mathbf{Q}-\mathbf{Q}_{||}$.

Define the Lyapunov functions $V_{||}(\mathbf{Q})=||\mathbf{Q}_{||}||^{2}=\frac{\left(\underset{m}{\sum}Q_{m}\right)^{2}}{M}$ and $W_{\bot}(\mathbf{Q})=||\mathbf{Q}_{\bot}||=\left[\underset{m}{\sum}Q_{m}^{2}-\frac{\left(\underset{m}{\sum}Q_{m}\right)^{2}}{M}\right]^{\frac{1}{2}}$.

\subsection{Lower Bound}\label{sub:p2_Lower_Bound}

Consider an arrival process with arrival rate $\lambda^{(\epsilon)}$
such that $\epsilon=M\mu-\lambda^{(\epsilon)}$. Let $\mathbf{q}^{(\epsilon)}(t)$
denote the corresponding queue length vector. Since the system is stabilizable,
there exists a steady-state distribution of $\mathbf{q}^{(\epsilon)}(t)$.
Again, lower bounding $(\underset{m}{\sum}\overline{\mathbf{q}}^{(\epsilon)})$ by a single queue length as in Section \ref{sub:lower_bound}, we  have \begin{equation}
\mathbb{E}\left[\underset{m}{\sum}\overline{\mathbf{q}}^{(\epsilon)}\right]\geq\frac{\left(\sigma^{(\epsilon)}\right)^{2}+\epsilon^{2}}{2\epsilon}-B_{1}\nonumber\end{equation}
where $B_{1}=\frac{Ms_{max}}{2}$ . Thus, in the heavy-traffic limit
we have \begin{equation}
\underset{\epsilon\rightarrow0}{\liminf}\epsilon\mathbb{E}\left[\underset{m}{\sum}\overline{\mathbf{q}}^{(\epsilon)}\right]\geq\frac{\sigma^{2}}{2}.\label{eq:p2_lower_limit}\end{equation}

\subsection{State Space Collapse}

For simplicity of notation, in this sub-section, we write $\mathbf{q}$
for $\mathbf{q}^{(\epsilon)}$. We will bound the drift of the Lyapunov function
$W_{\bot}(\mathbf{Q})$, and again use Lemma \ref{lem:Hajek} to  obtain state space collapse. We again use (\ref{eq:W_perp}) with $\mathbf{c}_{1}$ instead of $\mathbf{c}^{(k)}$ to get the drift of $W_{\bot}^{(k)}(\mathbf{q})$ in terms of drifts of $V(\mathbf{q})$ and $V_{||}^{(k)}(\mathbf{q})$.

Let us first consider the last term.
\begin{align}
\lefteqn{\mathbb{E}\left[\vartriangle V_{||}(\mathbf{q})|\mathbf{q}(t)=\mathbf{q}\right]}\nonumber \\
= & \mathbb{E}\left[V_{||}(\mathbf{q}(t+1))-V_{||}(\mathbf{q}(t))|\mathbf{q}(t)=\mathbf{q}\right]\nonumber \\
= & \frac{1}{M}\mathbb{E}\left[\left(\underset{m}{\sum}q_{m}(t+1)\right)^{2}-\left(\underset{m}{\sum}q_{m}(t)\right)^{2}|\mathbf{q}(t)=\mathbf{q}\right]\nonumber \\
= & \frac{1}{M}\mathbb{E}\left[\left(\underset{m}{\sum}q_{m}(t)+a_{m}(t)-\mu+\underset{m}{\sum}u_{m}(t)\right)^{2}\nonumber\right.\\
&\left.-\left(\underset{m}{\sum}q_{m}(t)\right)^{2}|\mathbf{q}(t)=\mathbf{q}\right]\nonumber \\
= & \frac{1}{M}\mathbb{E}\left[\left(\underset{m}{\sum}q_{m}(t)+a_{m}(t)-\mu\right)^{2}+\left(\underset{m}{\sum}u_{m}(t)\right)^{2}\right.\nonumber \\
 & \left.+2\left(\underset{m}{\sum}q_{m}(t)+a_{m}(t)-\mu\right)\left(\underset{m}{\sum}u_{m}(t)\right)\nonumber\right.\\
&\left.-\left(\underset{m}{\sum}q_{m}(t)\right)^{2}|\mathbf{q}(t)=\mathbf{q}\right] \nonumber\\
\geq & \frac{1}{M}\mathbb{E}\left[\left(\underset{m}{\sum}a_{m}(t)-\mu\right)^{2}+2\left(\underset{m}{\sum}q_{m}(t)\right)\left(\underset{m}{\sum}a_{m}(t)-\mu\right)\right.\nonumber\\
&\left.-2M\mu\left(\underset{m}{\sum}u_{m}(t)\right)|\mathbf{q}(t)=\mathbf{q}\right]\nonumber \\
\geq & \frac{2}{M}\left(\underset{m}{\sum}q_{m}\right)\mathbb{E}\left[\left(\underset{m}{\sum}a_{m}(t)-\mu\right)|\mathbf{q}(t)=\mathbf{q}\right]\nonumber\\
&-2\mu\mathbb{E}\left[\underset{m}{\sum}u_{m}(t)|\mathbf{q}(t)=\mathbf{q}\right]\nonumber \\
\geq & -K_{3}+2\left(\underset{m}{\sum}q_{m}\right)\left(\frac{\lambda}{M}-\mu\right)\nonumber \\
\geq & -K_{3}-2\frac{\epsilon}{M}\left(\underset{m}{\sum}q_{m}\right)\label{eq:p2_drift_V_parallel}
\end{align}

where $K_{3}=2M\mu^{2}$ is obtained by bounding $s_m(t)$ and $u_m(t)$ by $s_{max}$.

\sloppy
Now, we will bound the first term in (\ref{eq:W_perp}). Expanding ${[\vartriangle V(\mathbf{q})|\mathbf{q}(t)]}$ and using (\ref{eq:unused_po2}), it is easy to see that
\begin{align*}
\lefteqn{\mathbb{E}\left[\vartriangle V(\mathbf{q})|\mathbf{q}(t)=\mathbf{q}\right]}\\
\leq & K_{4}-2\mu\underset{m}{\sum}q_{m}(t)\\
 & +\mathbb{E}_{X}\mathbb{E}\left[\underset{m}{\sum}2q_{m}(t)a_{m}(t)|\mathbf{q}(t)=\mathbf{q},X(t)=i,j\right].
 \end{align*}
\fussy where $K_4=M(2\mu^2(D_{max}+1)+\sigma^2+\lambda^2)$.
Let $p$ be a permutation of $(1,2,...M)$ so that $q_{p(1)}\leq q_{p(2)}\leq.....\leq q_{p(M)}$.
Let $p'$ be the inverse permutation. In other words, $p'(m)$ is
the position of $m$ in the permutation $p$.
Let $q_{min}=q_{p(1)}$ and $q_{min}=q_{p(M)}$. Then, we have
\begin{align*}
\lefteqn{\mathbb{E}\left[\vartriangle V(\mathbf{q})|\mathbf{q}(t)=\mathbf{q}\right]}\\
\leq & K_4 -2\mu\underset{m}{\sum}q_{m}(t)+2q_{min}\frac{\lambda}{^{M}C_{2}}\\
 & +\underset{(i,j)\neq(p(1),p(M))}{\sum}\frac{1}{^{M}C_{2}}\mathbb{E}\left[q_{i}(t)a(t)+q_{i}(t)a(t)|X(t)=i,j\right]\\
= & K_4 -2\mu\underset{m}{\sum}q_{m}(t)-\frac{\lambda}{^{M}C_{2}}(q_{max}-q_{min})+\frac{2\lambda}{M}\underset{m}{\sum}q_{m}(t).\\
= & K_4 -2\frac{\epsilon}{M}\underset{m}{\sum}q_{m}(t)-\frac{\lambda}{^{M}C_{2}}(q_{max}-q_{min})\end{align*}
Note that \begin{align*}
||\mathbf{q}_{\perp}|| & =\sqrt{\underset{m}{\sum}\left(q_{m}-\frac{\underset{m}{\sum}q_{m}}{M}\right)^{2}}\\
 & \leq\sqrt{M\left(q_{max}-q_{min}\right)^{2}}\\
 & =\sqrt{M}\left(q_{max}-q_{min}\right).\end{align*}
Thus, we have, \begin{align*}
\mathbb{E}\left[\vartriangle V(\mathbf{q})|\mathbf{q}(t)=\mathbf{q}\right]
\leq & K_4 -2\frac{\epsilon}{M}\underset{m}{\sum}q_{m}(t)-\frac{\lambda}{^{M}C_{2}}\frac{||\mathbf{q}_{\perp}||}{\sqrt{M}}
\end{align*}

Substituting this and (\ref{eq:p2_drift_V_parallel}) in (\ref{eq:W_perp}),
we have \begin{align*}
\mathbb{E}\left[\vartriangle W_{\bot}(\mathbf{q})\right] & \leq\frac{K_{3}+K_{4}}{2||\mathbf{q}_{\bot}||}-\frac{\lambda}{^{M}C_{2}}\frac{1}{2\sqrt{M}}.\end{align*}
This means that we have negative drift for sufficiently large $W_{\bot}(\mathbf{q})$. Since the drift of $W_{\bot}(\mathbf{q})$ is finite with probability $1$, using Lemma \ref{lem:Hajek}, there exist finite constants
$\{N'_{r}\}_{r=1,2,...}$ such that $\mathbb{E}\left[||\overline{\mathbf{q}}^{(\epsilon)}||^{r}\right]\leq N'_{r}$
for each $r=1,2,...$.

\subsection{Upper Bound}

The upper bound is again obtained by bounding each of the terms in (\ref{eq:UB_rhs2}). This is identical to the case of JSQ routing (Proposition 3 in \cite{erysri-heavytraffic}). So, we will not repeat the proof here, but just state the upper bound.
\[
\mathbb{E}\left[\underset{m}{\sum}\overline{\mathbf{q}}^{(\epsilon)}\right]\geq\frac{\left(\sigma^{(\epsilon)}\right)^{2}+\epsilon^{2}}{2\epsilon}-B_{2}^{(\epsilon)}\]
where $B_{2}^{(\epsilon)}=M\sqrt{\frac{N_{2}s_{max}}{\epsilon}}+\frac{s_{max}}{2}$.
Thus, in heavy traffic limit, we have \begin{equation}
\underset{\epsilon\rightarrow0}{\liminf}\epsilon\mathbb{E}\left[\underset{m}{\sum}\overline{\mathbf{q}}^{(\epsilon)}\right]\geq\frac{\sigma^2}{2}.\nonumber\end{equation}
This coincides with the heavy-traffic lower bound in (\ref{eq:p2_lower_limit}). This establishes the first-moment heavy-traffic
optimality of power-of-two choices routing algorithm.

\section{Conclusions}

We considered a stochastic model for load balancing and
scheduling in cloud computing clusters. We studied the performance of JSQ routing and MaxWeight scheduling policy under this model. It was known that this policy is throughput optimal. We have shown that it is heavy traffic optimal when all the servers are identical. We also found that using the power-of-two-choices routing instead of JSQ routing is also heavy traffic optimal.

We then considered a simpler setting where the jobs are of the same type, so only load balancing is needed. It has been established by others using diffusion limit arguments  that the power-of-two-choices algorithm is heavy traffic optimal. We presented a  steady-state version of this result here using Lyapunov drift arguments.

\section{Acknowledgments}
Research was funded in part by ARO MURI W911NF-08-1-0233 and NSF grant CNS-0963807.

\bibliographystyle{abbrv}
\bibliography{references}
\end{document}